\newcommand{\be}{\begin{equation}}
\newcommand{\ee}{\end{equation}}
\newcommand{\bea}{\begin{eqnarray}}
\newcommand{\eea}{\end{eqnarray}}
\newcommand{\beas}{\begin{eqnarray*}}
\newcommand{\eeas}{\end{eqnarray*}}
\newcommand{\R}{\ensuremath{\mathbb R}}
\newcommand{\C}{\ensuremath{\mathbb C}}
\renewcommand{\P}{\ensuremath{\mathbb P}}
\newcommand{\Q}{\ensuremath{\mathbb Q}}
\newcommand{\E}{\ensuremath{\mathbb E}}
\newcommand{\cI}{\ensuremath{\mathcal{I}}}
\newcommand{\cJ}{\ensuremath{\mathcal{J}}}
\newcommand{\Ito}{It\ensuremath{\hat{\textrm{o}}}}
\newcommand{\ang}[1]{\ensuremath{ \left \langle #1 \right \rangle }}
\newcommand{\crl}[1]{\ensuremath{ \left\{ #1 \right\} }}
\newcommand{\edg}[1]{\ensuremath{ \left[ #1 \right] }}
\newcommand{\brak}[1]{\ensuremath{\left( #1 \right)}}
\renewcommand{\Re}{{\rm Re}}
\newcommand{\p}{\P}
\newcommand{\q}{\Q}
\newcommand{\Ex}[1]{\mathbb{E}_x \left[#1\right]}
\newtheorem{theorem}{Theorem}[section]
\newtheorem{corollary}[theorem]{Corollary}
\newtheorem{lemma}[theorem]{Lemma}
\newtheorem{remark}[theorem]{Remark}
\newtheorem{example}[theorem]{Example}
\newtheorem{examples}[theorem]{Examples}
\newtheorem{foo}[theorem]{Remarks}
\newenvironment{Example}{\begin{example}\rm}{\end{example}}
\newenvironment{Remark}{\begin{remark}\rm}{\end{remark}}
\title{Pricing and Hedging in Affine Models\\
with Possibility of Default\footnote{We thank Damir Filipovic, Ramon van Handel, Martin Keller-Ressel, 
Roger Lee and Ronnie Sircar for fruitful discussions and helpful comments.}}
\author{
Patrick Cheridito\footnote{Supported by NSF Grant DMS-0642361.}\\
Princeton University\\
Princeton, NJ 08544, USA
\and
Alexander Wugalter\footnote{Supported by NSF Grant DMS-0642361.}\\
Princeton University\\
Princeton, NJ 08544, USA
}
\date{First version: December 2010\\Current version: July 2011}
\begin{document}
\maketitle

\begin{abstract}
We propose a general framework for the simultaneous modeling of equity, government bonds, 
corporate bonds and derivatives. Uncertainty is generated by a general affine Markov process.
The setting allows for stochastic volatility, jumps, the possibility of
default and correlation between different assets. We show how to calculate
discounted complex moments by solving a coupled system of generalized Riccati equations.
This yields an efficient method to compute prices of power payoffs.
European calls and puts as well as binaries and asset-or-nothing options
can be priced with the fast Fourier transform methods of Carr and Madan (1999) and Lee (2005).
Other European payoffs can be approximated with a linear combination of government bonds, power
payoffs and vanilla options. We show the results to be superior to using
only government bonds and power payoffs or government bonds and vanilla options.
We also give conditions for European continent claims in our framework to be replicable
if enough financial instruments are liquidly tradable and study dynamic hedging strategies. 
As an example we discuss a Heston-type stochastic volatility model with possibility of default and 
stochastic interest rates.\\[2mm]
{\bf Key words} Pricing, hedging, affine models, stochastic volatility, jumps, default.
\end{abstract}

\setcounter{equation}{0}
\section{Introduction}
\label{sec:intro}

The goal of this paper is to provide a flexible class of models for the consistent pricing and
hedging of equity options, corporate bonds and government bonds. The noise in our models
is driven by an underlying Markov process that can generate stochastic volatility, jumps and default. For
the sake of tractability we assume it to be affine.
Then discounted complex moments of the underlying can be calculated
by solving a coupled system of generalized Riccati equations.
This yields an efficient method to compute prices of power payoffs and the
discounted characteristic function of the log stock price. From there,
prices of vanilla options as well as binaries and asset-or-nothing options can be
obtained with fast Fourier transform methods \`{a} la Carr and Madan \cite{CarrMadan} and Lee \cite{Lee}.
We also give conditions for European contingent claims in our models to be
replicable if there exist enough liquid securities that can be used as hedging instruments.

Our framework can be seen as an extension of the unified pricing and hedging model of
Carr and Schoutens \cite{CarrSchoutens}, where market completeness
is achieved through continuous trading in the money market, stock shares,
variance swaps and credit default swaps. The authors suggest to approximate general payoffs with
polynomials. In this paper we propose an approximations with a linear combination of government bonds,
non-integer power payoffs and European calls.

Affine models have become popular in the finance literature because they offer a
good trade-off between generality and tractability. One-factor affine processes were first used
by Vasicek \cite{Vasicek} and Cox--Ingersoll--Ross \cite{CIR} for interest rates modeling.
Popular affine stochastic volatility models include the ones by Stein and Stein \cite{SteinStein}
and Heston \cite{Heston}. For affine models in credit risk we refer to
Lando \cite{Lando}. Here we work with general affine processes in the sense of
Duffie et al. \cite{DFS}.

The rest of the paper is organized as follows. In Section 2 we introduce the model.
In Section 3 we show how discounted complex moments can be
calculated by solving generalized Riccati equations.
This yields an efficient way of calculating options with power payoffs.
As a corollary one obtains conditions for
the discounted stock price to be a martingale under the pricing measure.
The prices of vanilla options as well as
binaries and asset-or-nothing options can be computed with
the fast Fourier transform methods of Carr and Madan \cite{CarrMadan} and Lee \cite{Lee}.
For the pricing of European options with general payoffs we propose an
$L^2$-approximation with a linear combination of government bonds, power payoffs and European calls.
We illustrate this method by pricing a truncated log payoff, which can be applied
to the valuation of a variance swap in the case where the underlying can default. Section 4 is devoted
to the derivation of hedging rules. We show that in a model with
discrete jumps, or no jumps at all, every European option can
perfectly be hedged by trading in stock shares and a proper mix of European calls,
government and corporate bonds. A system of linear equations
is derived to find the hedging strategies. As an example we discuss in Section 5 a Heston-type
stochastic volatility model with the possibility of default and stochastic interest rates.

\setcounter{equation}{0}
\section{The model}
\label{sec:model}

Let $(X_t,\p_x)_{t \ge 0, \;x \in D}$ be a
time-homogeneous Markov process with values in $D := \mathbb{R}^m_+
\times \mathbb{R}^{n-m}$, $m \in \mathbb{N}$, $n \in \mathbb{N} \setminus \crl{0}$.
Let $\E_x$ be the expectation corresponding to $\p_x$ and denote by $\ang{\cdot,\cdot}$
the Euclidean (non-Hermitian) scalar product on $\C^n$:
$\left<x,y\right> : = \sum_{i=1}^n x_i y_i$, $x,y \in\C^n$. Furthermore, set
$\cI=\{1,\dots,m\}$ and $\mathcal{J}=\{m+1,\dots,n\}$. $X_{{\cal I},t}$ and $X_{{\cal J},t}$
denote the first $m$ and last $n-m$ components of $X_t$, respectively.

We model the risk-neutral evolution of the price of a stock share by
$$
S_t = \exp(s_t + R_t +\Lambda_t) 1_{\crl{t < \tau}},
$$
where
\begin{eqnarray*}
s_t &=& e + \ang{\varepsilon,X_t} \quad \mbox{for } (e,\varepsilon) \in \R\times\R^n\\
R_t &=& \int_0^t r_u du \quad
\mbox{for } r_t = d + \ang{\delta,X_{\cI,t}}, \quad
(d,\delta)\in\R_{+}\times\R^m_+\\
\Lambda_t &=& \int_0^t \lambda_u du \quad \mbox{for }
\lambda_t = c + \ang{\gamma,X_{\cI,t}}, \quad (c,\gamma) \in \R_+ \times \R^m_+
\end{eqnarray*}
and
$$
\tau = \inf \{t\ge0:\Lambda_t = E\}
$$
for a standard exponential random variable $E$ independent of $X$.

The process $r_t$ models the instantaneous risk-free interest rate and
$\lambda_t$ the default rate. Note that both of them are non-negative.
$\tau$ is the default time ($S_t = 0$ for $t \ge \tau$), and $s_t$ describes the
excess log-return of $S_t$ over $r_t + \lambda_t$ before default. 
Alternatively, one could model the three processes 
$\tilde{s}_t =  s_t + R_t + \Lambda_t$, $R_t$ and $\Lambda_t$.
But modeling $(s_t,R_t,\Lambda_t)$ is convenient since
it makes the discounted stock price equal to $\exp(s_t+\Lambda_t) 1_{\crl{t < \tau}}$
and will allow us to give a simple condition for it to be a martingale under $\p_x$
(see Corollary \ref{mp} below). 

For the sake of tractability we always assume that $(X_t,\P_x)_{t \ge 0,\;x \in D}$ is
stochastically continuous and affine in the sense that\\
(H1) $X_t \to X_{t_0}$ in probability for $t \to t_0$ with respect to all $\p_x$, $x \in D$, and\\
(H2) there exist functions
$\phi : \mathbb{R}_+ \times i \mathbb{R}^n \to \mathbb{C}$ and
$\psi : \mathbb{R}_+ \times i \mathbb{R}^n \to \mathbb{C}^n$ such that
$$
\E_x[\exp(\ang{u,X_t})] = \exp\left(\phi(t,u) + \ang{\psi(t,u),x}\right)
$$
for all $x \in D$ and $(t,u) \in \mathbb{R}_+ \times i\mathbb{R}^n$.

It has been shown by Keller-Ressel et al. \cite{KR2} that under
the hypotheses (H1) and (H2),
$(X_t,\P_x)_{t \ge 0,\;x \in D}$ automatically satisfies the regularity
condition of Definition 2.5 in Duffie et al. \cite{DFS}.
Therefore, according to Theorem 2.7 in Duffie et al. \cite{DFS}, it is a
Feller process whose infinitesimal generator ${\cal A}$ has
$C^2_c(D)$ (the set of twice continuously differentiable functions
$f: D \to \mathbb{R}$ with compact support) as a core and acts on
$f \in C^2_c(D)$ like
\bea
\notag
\mathcal{A}f(x) &=& \sum_{k,l=1}^n \left(a_{kl}+\left<\alpha_{\mathcal{I},kl},
x_\mathcal{I}\right>\right)\frac{\partial^2 f(x)}{\partial x_k\partial x_l}
+ \left<b+\beta x, \nabla f(x)\right>\\
\label{infgen}
&& +\int_{D \backslash\{0\}} \left(f(x+\xi)-f(x)-\left<\nabla_{\mathcal{J}} f(x),
\chi_{\mathcal J}(\xi)\right>\right) \nu(d\xi)\\
\notag
&& +\sum_{i=1}^m \int_{D\backslash\{0\}}
\left(f(x+\xi)-f(x)-\left<\nabla_{\mathcal{J}\cup\{i\}}f(x),\chi_{\mathcal{J}
\cup\{i\}}(\xi)\right>\right)x_i\mu_i(d\xi),
\eea
where $\chi = (\chi_1, \dots, \chi_n)$ for
$$
\chi_k(\xi)=
\begin{cases}
0 & \textrm{if $\xi_k=0$}\\
(1\wedge|\xi_k|)\frac{\xi_k}{|\xi_k|}, & \textrm{otherwise}
\end{cases}
$$
and
\begin{itemize}
\item[(i)] $a \in \R^{n \times n}$ is symmetric and positive
semi-definite with $a_{\mathcal{II}}=0$
\item[(ii)] $\alpha =(\alpha_i)_{i\in\mathcal I}$, where for each $i\in\mathcal I$,
$\alpha_i \in \R^{n\times n}$ is a positive
semi-definite symmetric matrix such that all entries in
$\alpha_{i,\mathcal{II}}$ are zero except for $\alpha_{i,ii}$
\item[(iii)] $b \in D$
\item[(iv)] $\beta \in \R^{n \times n}$ such that
$\beta_{\mathcal{IJ}}=0$ and $\beta_{\mathcal{II}}$ has
non-negative off-diagonal elements
\item[(v)] $\nu$ is a Borel measure on $D \backslash \{0\}$ with
$$\int_{D \backslash\{0\}} \left(\left<\chi_{\mathcal{I}}(\xi),
\mathbf{1}\right>+||\chi_{\mathcal{J}}(\xi)||^2\right) \nu(d\xi) < \infty$$
\item[(vi)] $\mu=(\mu_i)_{i\in\mathcal{I}}$ is a vector of Borel
measures on $D \backslash \{0\}$ such that
$$\int_{D\backslash\{0\}} \left(\left<\chi_{\mathcal{I}\backslash\{i\}}(\xi),
\mathbf{1}\right>+||\chi_{\mathcal{J}\cup\{i\}}(\xi)||^2\right) \mu_i(d\xi) < \infty.$$
\end{itemize}
In addition we require the jumps to satisfy the following exponential
integrability condition: For all $q \in \R^n$,
\begin{itemize}
\item[(vii)]
$$
\int_{D} e^{\left<q,\xi\right>}1_{\{||\xi||\ge1\}}\nu(d \xi)<\infty
\quad \mbox{and} \quad
\int_{D} e^{\left<q,\xi\right>}1_{\{||\xi||\ge1\}}\mu_i(d\xi)<\infty
\quad \mbox{for all $i\in\cI$}.
$$
\end{itemize}
It follows from Theorem 2.7 and Lemma 9.2 in Duffie et al. \cite{DFS}
that for every set of parameters $(a,\alpha,b,\beta,\nu,\mu)$
satisfying (i)--(vii), \eqref{infgen} defines the infinitesimal generator of a $D$-valued
Feller process $(X_t,\P_x)_{t \ge 0,\;x \in D}$ satisfying conditions (H1) and (H2)
for $(\phi, \psi)$ equal to the unique solution of the following system of generalized Riccati equations
\begin{align}
\label{RicPhi}
\begin{cases}
\partial_t \phi(t,u)=F_0(\psi(t,u)), \quad \phi(0,u) = 0\\
\partial_t \psi_{\cI}(t,u)= F(\psi(t,u)), \quad \psi_{\cI}(0,u) = u_{\cI}\\
\psi_{\cJ}(t,u)=\exp(\beta^T_{\mathcal{JJ}}t)u_\cJ
\end{cases},
\end{align}
where the functions $F_0 : \mathbb{C}^n \to \mathbb{C}$ and
$F : \mathbb{C}^n \to \mathbb{C}^m$ are given by
\beas
F_0(u)&=&\left<au,u\right> + \left<b,u\right>
+ \int_{D\backslash\{0\}}\left(e^{\left<u,\xi\right>}-1
-\left<u_{\cJ},\chi_\mathcal{J}(\xi)\right>\right)\nu(d\xi)\\
F_i(u)&=& \left<\alpha_i u,u\right> + \sum_{k =1}^n \beta_{k i} u_k
+\int_{D \backslash\{0\}}\left(e^{\left<u,\xi\right>}-1
-\left<u_{\mathcal{J}\cup\{i\}},\chi_{\mathcal{J}\cup\{i\}}(\xi)\right>\right)\mu_i(d\xi),
\quad i \in \cI.
\eeas
In particular, $(a,\alpha,b,\beta,\nu,\mu)$ uniquely determine the transition
probabilities of $X$ and therefore, its distribution under each $\p_x$, $x \in D$.
Since every Feller process has a RCLL version, we may assume
$(X_t, \tau, \p_x)_{t \ge 0, x \in D}$ to be defined on
${\cal R}_D \times \mathbb{R}_+$, where ${\cal R}_D$ is the space of all
RCLL functions $\omega : \mathbb{R}_+ \to D$, and
$(X_t, \tau)(\omega,y) = (\omega(t),y)$.

\setcounter{equation}{0}
\section{Pricing}
\label{sec:pricing}

We price derivatives on $S_t$ by taking expectation under $\P_x$.
For instance, we determine the price of a European option with payoff function
$\varphi: \R_+ \rightarrow \R$ and maturity $t>0$ by
$$
\Ex{\exp\left(-R_t\right)\varphi(S_t)}.
$$
Special cases include:
\begin{itemize}
\item Government bonds: $\varphi \equiv 1$
\item Corporate bonds: $\varphi(x)= 1_{\crl{x > 0}}$
\item Call options: $\varphi(x)=(x-K)^+$ for $K>0$
\item Power payoffs: $\varphi(x)= x^p 1_{\{x > 0\}}$ for $p \in \R$.
\end{itemize}
General power payoffs are not traded. But they can be priced efficiently
and are helpful in pricing other payoffs. Moreover, for $p=1$, the price of the power payoff
equals the stock price and having a general formula for $\Ex{\exp(-R_t)S_t}$
will allow us to obtain conditions for the discounted stock price
to be a martingale. The case $p=0$ corresponds to a corporate bond. We
just consider zero coupon bonds and assume there is no recovery in the case of default.
A corporate bond with a fixed recovery can be seen as a linear combination of a
government and a corporate bond with no recovery. So all of the arguments that
follow can easily be extended to this case.

In Subsection \ref{sub:disc} we show how discounted moments of $S_t$
can be obtained by solving coupled systems of generalized Riccati equations.
In Subsection \ref{sub:fft} we extend Fourier pricing methods from Carr and Madan \cite{CarrMadan}
and Lee \cite{Lee} to our setup. In Subsection \ref{sub:approx} we discuss the
approximation of general European options with
government bonds, power payoffs and European calls.

\subsection{Discounted moments and generalized Riccati equations }
\label{sub:disc}

For $t \ge 0$ and $x\in D$, define
$$
U_{t,x} := \left\{z \in \C : \Ex{\exp(-R_t) S_t^{\Re(z)} 1_{\{t<\tau\}}} <\infty\right\}
$$
and the discounted moments function
$$
h_{t,x}(z)
:= \Ex{\exp(- R_t) S_t^z 1_{\crl{t<\tau}}}, \quad z \in U_{t,x}.
$$
Note that $U_{t,x}$ is equal to $\mathbb{C}$, a half-plane or a vertical strip. Since $R_t \ge 0$,
it always contains the imaginary axis $i \mathbb{R}$. Clearly, $h_{t,x}$ is finite on $U_{t,x}$ and
analytic in the interior of $U_{t,x}$. Indeed, $h_{t,x}(z)$ can be written as
$$
h_{t,x}(z) = \Ex{\exp \brak{z s_t + (z-1) R_t + z \Lambda_t} 1_{\crl{t < \tau}}}.
$$
So for $z$ in the interior of $U_{t,x}$, one can differentiate inside of the
expectation to obtain
$$
h^\prime_{t,x}(z) = \Ex{(s_t + R_t + \Lambda_t) \exp \brak{z s_t + (z-1) R_t
+ z \Lambda_t} 1_{\crl{t < \tau}}}.
$$
Let
$$
A : \mathbb{R}_+ \times \C^n \times \C \times \C \to \C \cup \crl{\infty}, \quad
B : \mathbb{R}_+ \times \C^n \times \C \times \C \to (\C \cup \crl{\infty})^n
$$
be solutions to the system of generalized Riccati equations
\begin{align}
\label{RicA}
\begin{cases}
\partial_t A(t,u,v,w)= G_0(B(t,u,v,w),v,w), \quad A(0,u,v,w)=0,\\
\partial_t B_\cI(t,u,v,w)= G(B(t,u,v,w),v,w), \quad B_\cI(0,u,v,w)=u_\cI,\\
B_\cJ(t,u,v,w)=\exp(\beta^T_{\mathcal{JJ}}t)u_\cJ
\end{cases}
\end{align}
where the functions $G_0 : \mathbb{C}^n \times \mathbb{C} \times \mathbb{C} \to \mathbb{C}$ and $G :
\mathbb{C}^n \times \mathbb{C} \times \mathbb{C} \to \mathbb{C}^m$ are given by
\beas
G_0(u,v,w)&=&\left<au,u\right> + \left<b,u\right>+ dv +c(w-1)
+ \int_{D\backslash\{0\}}\left(e^{\left<u,\xi\right>}-1
-\left<u_{\cJ},\chi_\mathcal{J}(\xi)\right>\right)\nu(d\xi)\\
G_i(u,v,w)&=& \left<\alpha_iu,u\right> + \sum_{k =1}^n \beta_{k i}
u_k +\delta_iv+\gamma_i(w-1)\\
&& +\int_{D \backslash\{0\}}\left(e^{\left<u,\xi\right>}-1
-\left<u_{\mathcal{J}\cup\{i\}},\chi_{\mathcal{J}\cup\{i\}}(\xi)\right>\right)\mu_i(d\xi),
\quad i \in \cI.
\eeas
The quadratic and exponential terms in $G_i$ can cause the solution
$(A,B)$ of \eqref{RicA} to explode in finite time. However, since
in condition (vii) we assumed the jumps of the
process $X$ to have all exponential moments, we obtain from Lemma
5.3 of Duffie et al. \cite{DFS} that the functions
$G_i$, $i = 0, \dots, m$, are analytic on $\C^{n+2}$.
In particular, they are locally Lipschitz-continuous, and it follows from the
Picard--Lindel\"of Theorem that for every $(u,v,w) \in \C^n \times \C \times \C$ there exists
$t^* >0$ such that equation \eqref{RicA} has a unique finite solution $(A,B)$ for $t \in [0,t^*)$.
We set $A$ and all components of $B$ equal
to $\infty$ after the first explosion time.
Note that the components $B_j(t,z\varepsilon,z-1,z)$, $j \in \cJ$, are finite and analytic
in $z \in \mathbb{C}$ for all $t \ge 0$, and
if $B_i(t,z\varepsilon,z-1,z)$ is finite/analytic in $z$ for all
$i \in \cI$, then so is $A(t,z\varepsilon,z-1,z)$. So for fixed $t \ge 0$, we define
$$
V_t := \{z \in \C: B_i(t,z\varepsilon,z-1,z) \mbox{ is finite for all } i \in \cI \}.
$$
and
\be \label{defl}
l_{t,x}(z) := \exp \crl{ze+ A(t,z\varepsilon,z-1,z)+\left<B(t,z\varepsilon,z-1,z),x\right>}, \quad z \in V_t.
\ee
In a first step we show the following

\begin{lemma} \label{lemma1}
For every $t \ge 0$, $V_t$ contains $i \mathbb{R}$ and
$h_{t,x}(z) = l_{t,x}(z)$ for all $x \in D$ and $z \in i \mathbb{R}$.
\end{lemma}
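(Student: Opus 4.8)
The plan is to identify $h_{t,x}(z)$ with $l_{t,x}(z)$ on the imaginary axis by a Feynman--Kac / martingale argument. Fix $z = i\omega \in i\mathbb{R}$. The key observation is that since $R_t, \Lambda_t \ge 0$ and $|S_t^{i\omega}| \le 1$ on $\crl{t < \tau}$, the expectation defining $h_{t,x}(i\omega)$ is bounded by $1$, so $i\mathbb{R} \subseteq U_{t,x}$; the content is to show $V_t \supseteq i\mathbb{R}$ as well and that the two functions agree. First I would integrate out the exponential random variable $E$: conditionally on the path of $X$, the event $\crl{t < \tau}$ has probability $\exp(-\Lambda_t)$, so
$$
h_{t,x}(z) = \E_x\!\edg{\exp\brak{z s_t + (z-1)R_t + z\Lambda_t - \Lambda_t}} = \E_x\!\edg{\exp\brak{ze + \ang{z\varepsilon, X_t} + (z-1)R_t + (z-1)\Lambda_t}}.
$$
Writing $(z-1)R_t + (z-1)\Lambda_t = \int_0^t \edg{(z-1)(d + \ang{\delta, X_{\cI,u}}) + (z-1)(c + \ang{\gamma, X_{\cI,u}})}\,du$ and substituting $v = z-1$, $w = z$ exhibits this as a discounted exponential-affine functional of the form handled by the extended affine transform formula: the discount rate is affine in $X_{\cI}$ with the constant term $dv + c(w-1) = d(z-1) + c(z-1)$ and linear term built from $\delta_i v + \gamma_i(w-1)$, which is exactly the structure encoded in $G_0$ and $G_i$.

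The main step is then to invoke the affine transform theorem of Duffie et al. (their Theorem 2.16, or the version in Keller-Ressel et al.) to conclude that for $z \in i\mathbb{R}$ the system \eqref{RicA} with arguments $(z\varepsilon, z-1, z)$ has a global finite solution and
$$
\E_x\!\edg{\exp\brak{ze + \ang{z\varepsilon, X_t} + (z-1)R_t + (z-1)\Lambda_t}} = \exp\crl{ze + A(t, z\varepsilon, z-1, z) + \ang{B(t, z\varepsilon, z-1, z), x}} = l_{t,x}(z).
$$
To apply that theorem one needs the relevant admissibility/integrability conditions: the real parts of the ``spatial'' argument $z\varepsilon$ and of the running coefficients must lie in the region where the transform is valid. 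Here $\Re(z\varepsilon) = 0$, $\Re(v) = \Re(z-1) = -1 \le 0$, and $\Re(w-1) = -1 \le 0$, and because $d, \delta, c, \gamma \ge 0$ this makes the effective discount rate have non-negative real part, so the exponential functional is bounded by $1$ and no explosion occurs; the exponential-moment assumption (vii) guarantees $G_0, G_i$ are well defined and analytic there. Alternatively, and perhaps more cleanly, I would give a direct verification: show that $M_s := \exp\crl{-\int_0^s \edg{r_u + \lambda_u}du}\, l_{t-s, X_s}(z) \cdot (\text{correction factors})$ is a local martingale on $[0,t]$ by applying the generator $\mathcal{A}$ to the $C^{1,2}$ function $l$ and checking that \eqref{RicA} is precisely the PDE making the drift vanish, then upgrade to a true martingale using the uniform bound $|M_s| \le 1$ and take expectations at $s=0$ and $s=t$.

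The hard part will be the regularity bookkeeping needed to run either argument rigorously: justifying that $l_{t,x}(z)$ is $C^{1,2}$ in $(t,x)$ up to the boundary of $D = \mathbb{R}^m_+ \times \mathbb{R}^{n-m}$ so that the Feynman--Kac computation and the application of $\mathcal{A}$ (whose core is only $C^2_c(D)$) are legitimate, and confirming that the solution of \eqref{RicA} does not explode for these particular arguments. For the non-explosion one uses that on $i\mathbb{R}$ the candidate $l_{t,x}(z)$ stays bounded and equals a genuine expectation, which by comparison forces $B_i(t, z\varepsilon, z-1, z)$ to remain finite — this is the standard way the affine literature rules out blow-up on the imaginary axis, and I expect to cite Lemma 9.2 or the proof of Theorem 2.16 in Duffie et al. rather than reprove it. Once finiteness on $i\mathbb{R}$ is in hand, $i\mathbb{R} \subseteq V_t$ is immediate from the definition of $V_t$, and the identity $h_{t,x} = l_{t,x}$ on $i\mathbb{R}$ follows from the transform formula.
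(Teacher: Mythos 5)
Your overall strategy matches the paper's: augment the state with $(R_t,\Lambda_t)$ so that the discounted moment becomes an exponential‑affine transform of an enlarged affine process, then invoke the Duffie--Filipovic--Schachermayer transform formula on $i\mathbb{R}$. The one cosmetic difference is how you dispose of the indicator $1_{\{t<\tau\}}$: you integrate out $E$ first, turning it into an extra discount factor $e^{-\Lambda_t}$ so that the exponent on $\Lambda_t$ becomes $w-1=z-1$; the paper instead keeps the indicator and works with the killed process $Y_t$ (cemetery state $\Delta$), building the killing rate $c+\langle\gamma,x_{\cI}\rangle$ into the generator $\hat{\mathcal A}$. Both routes produce exactly the Riccati system \eqref{RicA} with the $dv+c(w-1)$ and $\delta_i v+\gamma_i(w-1)$ terms, so this is a matter of presentation rather than substance. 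Your Feynman--Kac martingale alternative is also viable but is not what the paper does.

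The gap is precisely the point you flag as ``the hard part'' and then defer: you never actually verify that the augmented process $(X,R,\Lambda)$ (or the killed version $Y$) is a \emph{regular affine} process on $\hat D=\mathbb{R}^m_+\times\mathbb{R}^{n-m}\times\mathbb{R}^2_+$ whose generator has the DFS form with admissible parameters $(\hat a,\hat\alpha,\hat b,\hat\beta,\dots)$. Without that, ``invoke the affine transform theorem'' is not yet a proof, because DFS's Theorem 2.7 applies to $X$, not automatically to $(X,R,\Lambda)$. This verification is the bulk of the paper's argument: it establishes the Feller property of $Y$ via Proposition III.2.4 of Revuz--Yor, computes the generator limit on $C^2_c(\hat D)$ using stochastic continuity and the finite‑variation structure of $R$ and $\Lambda$, identifies $\hat{\mathcal A}$ via Lemma 31.7 of Sato, checks the admissibility conditions (in particular $\hat b=(b,d,c)\in\hat D$, which uses $d,c\ge 0$, and the block structure of $\hat\beta$), and only then applies Theorem 2.7 of DFS to conclude \eqref{fDFS} for $(u,v,w)\in\mathbb{C}_-^m\times(i\mathbb{R})^{n-m}\times\mathbb{C}_-^2$. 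Citing ``Lemma 9.2 or the proof of Theorem 2.16'' does not cover this, since those results again concern $X$, not the augmented process. So the proposal is strategically sound but leaves the essential construction and verification as an IOU.
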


\begin{proof}
Fix $\p_x$, $x_{n+1}, x_{n+2} \in \mathbb{R}_+$ and denote
$\hat{x} = (x,x_{n+1},x_{n+2})$. Consider the process
$$
Y_t = \left\{
\begin{array}{cl}
(X_t, x_1 +R_t,x_2 +\Lambda_t) & \textrm{for }t<\tau\\
\Delta & \textrm{for } t \ge \tau
\end{array} \right.,$$
where $\Delta$ is a cemetery state outside of
$\hat{D} := \mathbb{R}_+^m \times \mathbb{R}^{n-m} \times \mathbb{R}^2_+$
to which $Y$ jumps at time $\tau$. $Y$ is a Markov process with values in $\hat{D} \cup \crl{\Delta}$.
Since $X$ has the properties (H1)--(H2), $Y$ fulfills the assumptions of
Proposition III.2.4 of Revuz and Yor \cite{RY} and therefore is a Feller process. Moreover,
$R_t = \int_0^t r_u du$ and $\Lambda_t = \int_0^t \lambda_u du$ are of finite variation
and the random variable $E$ is independent of $X$. So one deduces from (H1) that for all
$\hat{x} \in \hat{D}$ and $f \in C^2_c(\hat{D})$,
\beas
&& \frac{1}{t} \E_x[f(X_t,x_{n+1}+R_t,x_{n+2}+\Lambda_t) 1_{\crl{t < \tau}} - f(x,x_{n+1},x_{n+2})]\\
&=& \frac{1}{t} \E_x \edg{f(X_t,x_{n+1},x_{n+2}) - f(x,x_{n+1},x_{n+2})}\\
&& + \frac{1}{t} \E_x \edg{f(X_t,x_{n+1}+R_t,x_{n+2}+\Lambda_t) e^{-\Lambda_t} - f(X_t,x_{n+1},x_{n+2})}\\
&\to& \hat{\cal A}f(\hat{x}) \quad \mbox{for } t \downarrow 0,
\eeas
where
\beas
\hat{\mathcal{A}} f(\hat{x}) &=& \sum_{k,l=1}^n \left(a_{kl}
+\left<\alpha_{{\mathcal{I}},kl},x_{{\mathcal{I}}}\right>\right)
\frac{\partial^2 f(\hat{x})}{\partial x_k\partial x_l}
+ \left<\hat{b}+ \hat{\beta} \hat{x}, \nabla f(\hat{x})\right>
-(c+\left<\gamma,x_{{\cI}}\right>)f(\hat{x})\\
&& +\int_{D \setminus \{0\}} \left(f(\hat{x}+(\xi,0,0))-f(\hat{x})-\left<\nabla_{\mathcal{J}} f(\hat{x}),
\chi_{\mathcal J}(\xi)\right>\right) \nu(d\xi)\\
&& +\sum_{i=1}^m \int_{D \backslash\{0\}}
\left(f(\hat{x}+ (\xi,0,0))-f(\hat{x})-\left<\nabla_{\mathcal{J} \cup\{i\}} f(\hat{x}),
\chi_{\mathcal{J} \cup\{i\}} (\xi)\right>\right)x_i \mu_i (d \xi),
\eeas
and
\begin{itemize}
\item $\hat{b} = (b,d,c)$
\item $\hat{\beta} =
\left(
\begin{array}{cccc}
\beta_{\cI \cI} & 0 & 0 & 0\\
\beta_{\cJ \cI} & \beta_{\cJ\cJ} & 0 & 0\\
\delta & 0 & 0 & 0\\
\gamma & 0 & 0 & 0
\end{array}\right)\in\R^{(n+2)\times(n+2)}$.
\end{itemize}
By Lemma 31.7 of Sato \cite{Sato}, the infinitesimal generator of $Y$
is well-defined and equal to $\hat{\cal A}$ on $C^2_c(\hat{D})$
(if $f(\Delta)$ is understood to be $0$ for $f \in C^2_c(\hat{D})$).
But since $\hat{\cal A}$ is the infinitesimal generator of a regular affine process
with values in $\hat{D} \cup \crl{\Delta}$, it follows from Theorem 2.7 of Duffie et al. \cite{DFS}
that
\be \label{fDFS}
\E_x \left[\exp\left(\left<u,X_t\right>+vR_t +w\Lambda_t\right) 1_{\{t < \tau\}}\right]
= \exp\left(A(t,u,v,w)+\left<B(t,u,v,w),x\right>\right).
\ee
for all $(t,u,v,w) \in \mathbb{R}_+ \times \mathbb{C}_-^m \times (i \mathbb{R})^n \times
\mathbb{C}_-^2$, where $\mathbb{C}_-$ denotes the set of all $z \in \mathbb{C}$ with $\mbox{Re}(z) \le 0$.
In particular, for $(t,u,v,w) \in \mathbb{R}_+ \times \mathbb{C}_-^m \times (i \mathbb{R})^n \times
\mathbb{C}_-^2$, $A(t,u,v,w)$ and all components of $B(t,u,v,w)$ are finite.
In fact, according to Theorem 2.7 of Duffie et al. \cite{DFS}, the function
$B$ should have $n+2$ components and $G$ should have $m+2$. But due to the
special form of $\hat{\cal A}$,
components $m+1$ and $m+2$ of $G$ vanish, and the corresponding components of $B$ stay equal to the
initial values $v$ and $w$ for all times $t \ge 0$. In addition, $R_0 = \Lambda_0 = 0$, and one
obtains \eqref{fDFS}, where the functions $A$ and $B$ solve \eqref{RicA}.

Now notice that
$$
h_{t,x}(z)= \exp(ze)
\E_x \left[\exp\left(z \left<\varepsilon,X_t\right>+ (z-1)R_t +z\Lambda_t\right) 1_{\{t < \tau\}}\right].
$$
So, for all $t \ge 0$, $V_t$ contains $i \mathbb{R}$ and
$$
h_{t,x}(z) = \exp \crl{ze+ A(t,z\varepsilon,z-1,z)+\left<B(t,z\varepsilon,z-1,z),x\right>}
$$
for all $x \in D$ and $z \in i \mathbb{R}$.
\end{proof}

\begin{lemma} \label{lemma2}
Fix $t_0 \ge 0$ and $z_0 \in V_{t_0}$. Then there exists an open neighborhood
$W$ of $(t_0,z_0)$ in $\mathbb{R}_+ \times \mathbb{C}$ such that
$$
A(t,z \varepsilon,z-1,z) \quad \mbox{and} \quad B(t,z\varepsilon,z-1,z)
$$
are finite for $(t,z) \in W$ as well as analytic in $t$ and $z$. In particular, the set
$$
\crl{(t,z) \in \mathbb{R}_+ \times \mathbb{C} : z \in V_t}
$$
is open and for all $x \in D$, $l_{t,x}(z)$ is analytic in $t$ and $z$.
\end{lemma}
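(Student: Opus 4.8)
The plan is to view the system \eqref{RicA}, restricted to the curve $z\mapsto(z\varepsilon,z-1,z)$, as a finite-dimensional ordinary differential equation whose right-hand side is holomorphic in the unknowns, in time, and in the parameter $z$, and then to invoke the classical theorem on analytic dependence of solutions of ODEs on initial conditions and parameters.

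First I would eliminate the $\mathcal{J}$-block. Since $B_{\cJ}(t,z\varepsilon,z-1,z)=\exp(\beta^T_{\mathcal{JJ}}t)\,z\varepsilon_{\cJ}$ is entire in $(t,z)$, the pair $\Phi(t,z):=\brak{A(t,z\varepsilon,z-1,z),\,B_{\cI}(t,z\varepsilon,z-1,z)}$ solves an equation of the form $\partial_t\Phi=\mathcal{G}(t,\Phi,z)$, $\Phi(0,z)=(0,z\varepsilon_{\cI})$, where $\mathcal{G}$ arises from $(G_0,G)$ by inserting the explicit $\mathcal{J}$-block together with $v=z-1$ and $w=z$ (the first coordinate $A$ actually does not enter $\mathcal{G}$, so $B_{\cI}$ alone solves a closed subsystem and $A$ is a quadrature of it). By condition (vii) and Lemma 5.3 of Duffie et al. \cite{DFS}, $G_0$ and $G$ are analytic on $\C^{n+2}$; the only $t$-dependence of $\mathcal{G}$ enters through the entire matrix function $\exp(\beta^T_{\mathcal{JJ}}t)$, so $\mathcal{G}$ is holomorphic jointly in $(t,\Phi,z)$ once $t$ is allowed to range over a complex neighborhood of $\R_+$. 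Thus, along our curve, \eqref{RicA} is a holomorphic ODE with holomorphic parameter dependence, and by the observation recorded just before the lemma, $z\in V_t$ holds precisely when $\Phi(t,z)$ is finite.

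Next I would appeal to the standard complex-analytic flow theorem: the set of $(t,z)$ at which the maximal solution of $\partial_t\Phi=\mathcal{G}(t,\Phi,z)$, $\Phi(0,z)=(0,z\varepsilon_{\cI})$, still exists is open, and $\Phi$ is holomorphic on it (this follows from the holomorphic Picard--Lindel\"of theorem together with a continuation argument in $z$). Fix $t_0\ge0$ and $z_0\in V_{t_0}$; by definition of $V_{t_0}$ and the observation above, $\Phi(t_0,z_0)$ is finite, and since $\Phi$ is set to $\infty$ only from the first explosion time on and coincides with the genuine analytic solution before it, the maximal existence time at $z_0$ exceeds $t_0$, so $(t_0,z_0)$ lies in that domain. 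Openness then yields an open neighborhood $W$ of $(t_0,z_0)$ in $\R_+\times\C$ on which $A(t,z\varepsilon,z-1,z)$ and $B_{\cI}(t,z\varepsilon,z-1,z)$ are finite and analytic in $(t,z)$; adjoining the entire function $B_{\cJ}$ shows the whole vector $B(t,z\varepsilon,z-1,z)$ is finite and analytic on $W$. Since $\crl{(t,z)\in\R_+\times\C:z\in V_t}$ is the union of all such neighborhoods, it is open, and on it
$$
l_{t,x}(z)=\exp\crl{ze+A(t,z\varepsilon,z-1,z)+\ang{B(t,z\varepsilon,z-1,z),x}}
$$
is a composition of analytic maps, hence analytic in $(t,z)$ for every $x\in D$.

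The step deserving the most care is upgrading the usual continuous (or $C^\infty$) dependence of ODE flows on data to genuine analyticity in the \emph{real} time variable: one complexifies $t$---which is legitimate since $\mathcal{G}$ extends holomorphically to complex $t$---applies the holomorphic flow theorem in $(t,z)\in\C^2$, and restricts back to $\R_+$, handling the endpoint $t_0=0$ with a one-sided neighborhood. It is worth noting that the exceptional structure of the $\mathcal{II}$-blocks in (i)--(ii) is irrelevant here, since the argument uses only the analyticity of $G_0$ and $G$ on all of $\C^{n+2}$, not any positivity or boundary behavior of the solution.
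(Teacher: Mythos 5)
Your proof is correct and takes essentially the same route as the paper, which disposes of the lemma in two lines by citing Lemma 5.3 of Duffie et al.\ for the analyticity of $G_0,\dots,G_m$ on $\C^{n+2}$ and Theorem 1.1 of Ilyashenko--Yakovenko for the holomorphic flow theorem. You have simply spelled out what that citation encapsulates---eliminating the $\mathcal{J}$-block, complexifying $t$, and applying the analytic Picard--Lindel\"of theorem with holomorphic dependence on the parameter $z$ and initial data.
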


\begin{proof}
By Lemma 5.3 of Duffie et al. \cite{DFS}, the functions $G_i$, $i = 0, \dots, m$, are analytic on $\C^{n+2}$.
So, the lemma is a consequence of Theorem 1.1 in Ilyashenko and Yakovenko \cite{IY}.
\end{proof}

In the following theorem we extend the identity $h_{t,x}(z) = l_{t,x}(z)$
to $z$ outside of the imaginary axis $i \mathbb{R}$. Similar results
in different settings have been given by Filipovic and Mayerhofer \cite{FilipovicMayerhofer},
Keller-Ressel \cite{KR}, Spreij and Veerman \cite{SpreijVeerman}.

Denote by $I_t$ the largest interval around $0$ contained in
$V_t \cap \mathbb{R}$ and by $V^0_t$ the connected component of $V_t$ containing $0$.
It follows from Lemma \ref{lemma2} that $I_t$ is open in $\mathbb{R}$, and it is clear that
$I_t \subset V^0_t$. Moreover, one has

\begin{theorem} \label{thmmain}
For all $(t,x) \in \mathbb{R}_+ \times D$, $U_{t,x}$ is an open subset of $\mathbb{C}$ containing
the strip $\crl{z \in \mathbb{C} : \Re(z) \in I_t}$ and $h_{t,x}(z) = l_{t,x}(z)$
for each $z \in U_{t,x} \cap V^0_t$.
\end{theorem}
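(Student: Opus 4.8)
The plan is to propagate the identity $h_{t,x}=l_{t,x}$, which Lemma~\ref{lemma1} gives on $i\R$, first to the strip $\{z:\Re(z)\in I_t\}$ and then to $U_{t,x}\cap V^0_t$, and simultaneously to pin down $U_{t,x}$ itself. The starting observation is that, by the very definition of $U_{t,x}$, membership of $z$ depends only on $\Re(z)$, so $U_{t,x}=\{z:\Re(z)\in J\}$ for the interval $J:=U_{t,x}\cap\R$ around $0$, and ``$U_{t,x}$ open'' is the same as ``$J$ open''. Moreover, once $J=I_t$ is known everything else follows: $U_{t,x}$ is then the open strip $\{z:\Re(z)\in I_t\}$, which by the standard comparison estimate for \eqref{RicA} (see below) lies in $V^0_t$, so $U_{t,x}\cap V^0_t$ equals that strip --- a connected open set containing $i\R$ on which $h_{t,x}$ and $l_{t,x}$ are both analytic (Lemmas~\ref{lemma1}--\ref{lemma2}) and agree on $i\R$, whence the identity theorem yields $h_{t,x}=l_{t,x}$ there. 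So the theorem reduces to (a) $\{z:\Re(z)\in I_t\}\subset U_{t,x}$, and (b) $J\subset\overline{I_t}$, since (b) together with $I_t\subset J$ from (a) and the openness of $I_t$ (Lemma~\ref{lemma2}) forces $J=I_t$.

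For (a), fix a real $p\in I_t$ and $x\in D$. Since $p\in I_t$, the solution $(A,B)$ of \eqref{RicA} started from $(p\varepsilon,p-1,p)$ is finite on all of $[0,t]$. Arguing as in the proof of Lemma~\ref{lemma1} with the enlarged process $Y$ and its generator $\hat{\mathcal A}$ started from $\hat x=(x,0,0)$, the function $\hat f(s,\hat y):=\exp\{A(t-s,p\varepsilon,p-1,p)+\langle B(t-s,p\varepsilon,p-1,p),y\rangle+(p-1)y_{n+1}+py_{n+2}\}$, with $\hat f(s,\Delta):=0$, then satisfies $\partial_s\hat f+\hat{\mathcal A}\hat f=0$, so $\hat f(s,Y_s)$ is a nonnegative local martingale on $[0,t]$. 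Stopping it at $\sigma_k:=\inf\{s:\norm{X_s}\ge k\}$ makes it bounded (here condition (vii) is used to pass the martingale property from $C^2_c(\hat D)$ to the exponential $\hat f$), so $\E_x[\hat f(t\wedge\sigma_k,Y_{t\wedge\sigma_k})]=\hat f(0,\hat x)=e^{-pe}l_{t,x}(p)$; letting $k\to\infty$ (so that $t\wedge\sigma_k\uparrow t$, using that $X$ is conservative) and applying Fatou's lemma gives $e^{-pe}h_{t,x}(p)=\E_x[\hat f(t,Y_t)]\le e^{-pe}l_{t,x}(p)<\infty$. Hence $p\in U_{t,x}$, so the whole strip $\{z:\Re(z)\in I_t\}$ lies in $U_{t,x}$. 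By a standard comparison argument for \eqref{RicA} ($\Re B(t,z\varepsilon,z-1,z)\le B(t,(\Re z)\varepsilon,\Re z-1,\Re z)$ componentwise and the complex solution exists at least as long as the real one), this strip also lies in $V^0_t$, and being analytic there and equal on $i\R$, $h_{t,x}$ and $l_{t,x}$ coincide on all of it.

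For (b) it suffices to prove: if $p\ge 0$ and the real solution of \eqref{RicA} started from $(p\varepsilon,p-1,p)$ blows up at some time $T(p)\le t$, then $h_{t,x}(p)=\infty$ (the case $p\le0$ is symmetric). Indeed, for $p>\sup I_t$ one has $T(p)\le t$ by monotonicity of \eqref{RicA} in its initial datum, so then $p\notin J$; and likewise $\sup I_t\notin J$. To prove the claim, fix $s<T(p)$; then $p\in I_s$, so by (a) $h_{s,\,\cdot}(p)=l_{s,\,\cdot}(p)$, and the Markov property of $X$ at time $t-s$ together with the memorylessness of $E$ gives
\[
h_{t,x}(p)=\E_x\!\left[1_{\{t-s<\tau\}}\,\exp\{(p-1)R_{t-s}+p\Lambda_{t-s}\}\;l_{s,X_{t-s}}(p)\right].
\]
Now let $s\uparrow T(p)$, so $t-s\downarrow t-T(p)$. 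By right-continuity of the paths, $(X_{t-s},R_{t-s},\Lambda_{t-s})\to(X_{t-T(p)},R_{t-T(p)},\Lambda_{t-T(p)})$ and $1_{\{t-s<\tau\}}\uparrow 1_{\{t-T(p)<\tau\}}$ $\p_x$-a.s., while $l_{s,X_{t-s}}(p)=\exp\{pe+A(s,p\varepsilon,p-1,p)+\langle B(s,p\varepsilon,p-1,p),X_{t-s}\rangle\}\to\infty$ because $(A,B)$ explodes at $T(p)$. Applying Fatou's lemma to this family of identities (whose common left-hand side is $h_{t,x}(p)$, independent of $s$) then forces $h_{t,x}(p)=\infty$, since $\{t-T(p)<\tau\}$ has positive probability. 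This gives $J\subset\overline{I_t}$, hence $J=I_t$, and the theorem.

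The delicate point --- and the step I expect to be the main obstacle --- is the assertion in paragraph three that $l_{s,X_{t-s}}(p)\to\infty$ as $s\uparrow T(p)$: one must rule out a cancellation in which $A(s,\cdots)\to-\infty$ absorbs the blow-up of some coordinate of $\langle B(s,\cdots),X_{t-s}\rangle$, i.e.\ one needs that an explosion of the $\C^n$-valued Riccati solution genuinely produces $A(s,\cdots)+\langle B(s,\cdots),y\rangle\to+\infty$ at the relevant states $y$. For $y$ in the interior of $D$ this should follow from the sign conditions (i)--(vi) (the quadratic terms in $G_0,G_i$ are positive semidefinite and drive the solution upward); for $y\in\partial D$, where some coordinates $y_i$ vanish, a separate argument is needed --- e.g.\ first settling interior initial states and then passing to the boundary via lower semicontinuity of $x\mapsto h_{t,x}(p)$ under the Feller property of $X$. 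This equivalence of moment explosion and Riccati explosion, together with the monotonicity and comparison properties of \eqref{RicA} invoked above, is the technical heart; analogues in related settings are in Filipovic and Mayerhofer \cite{FilipovicMayerhofer}, Keller-Ressel \cite{KR}, and Spreij and Veerman \cite{SpreijVeerman}, whose techniques I would adapt here.
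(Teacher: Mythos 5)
Your proposal takes a genuinely different route from the paper, and the route contains a real gap. The paper's proof is short: it rewrites $y\mapsto h_{t,x}(iy)$ as a constant multiple of the characteristic function of $s_t+R_t+\Lambda_t$ under the tilted measure $\q_x$, notes from Lemmas \ref{lemma1}--\ref{lemma2} that this characteristic function agrees with the analytic function $l_{t,x}$ on a complex neighborhood of $0$, and then invokes Theorem 7.1.1 of Lukacs on analytic characteristic functions. That theorem delivers in one stroke that the characteristic function is regular in a maximal horizontal strip whose boundary points on the imaginary axis are singularities, from which both the containment $\{\Re(z)\in I_t\}\subset U_{t,x}$ and the openness of $U_{t,x}$ follow; the identity $h_{t,x}=l_{t,x}$ on $U_{t,x}\cap V^0_t$ is then analytic continuation from $i\R$.

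Your reduction is to prove $J:=U_{t,x}\cap\R=I_t$, and this is both more than the theorem asserts and, more seriously, false in general. The theorem only claims $\{\Re(z)\in I_t\}\subset U_{t,x}$; it deliberately does not claim equality, because $I_t$ is defined through the Riccati flow $(A,B)$, which is $x$-independent, while $U_{t,x}$ depends on $x$. For example, if $x=0$ (and the drift is zero) the process may be degenerate at the origin, $h_{t,0}$ can be finite for all $z$, yet $B$ still explodes in finite time so $I_t$ is bounded; then $J\supsetneq I_t$, contradicting your step (b). Your Fatou argument for (b) breaks precisely at the point you flag as delicate — the blow-up of $(A,B)$ need not force $A(s,\cdot)+\langle B(s,\cdot),X_{t-s}\rangle\to+\infty$, and when components of $X_{t-s}$ sit at the boundary of $D$ (or when $a,b,\nu$ are degenerate), it does not. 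This is not a repairable technicality in your scheme: since you derive the openness of $U_{t,x}$ solely from the identity $J=I_t$, the openness assertion — which is one of the two nontrivial conclusions of the theorem — remains unproven. Step (a), showing $I_t\subset J$ via the local-martingale/Fatou comparison with the enlarged generator $\hat{\mathcal A}$, is reasonable in spirit and close to arguments in Filipovic--Mayerhofer and Spreij--Veerman; but by itself it gives neither openness of $U_{t,x}$ nor the equality $h=l$ on all of $U_{t,x}\cap V^0_t$ (only on the strip over $I_t$). You would need to replace (b) by something that directly controls the regularity of $p\mapsto h_{t,x}(p)$ near the endpoints of $J$; this is exactly what the characteristic-function theorem of Lukacs provides, and is the route the paper takes.
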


\begin{proof}
For fixed $t \ge 0$ and $x \in D$, one can write
$$
h_{t,x}(iy) = \Ex{\exp(-R_t) 1_{\crl{t < \tau}}}
\mathbb{E}_{\q_x} \edg{\exp(iy [s_t + R_t + \Lambda_t])}, \quad y \in \mathbb{R},
$$
where $\q_x$ is the probability measure given by
$$
\frac{d \q_x}{d \p_x} = \frac{\exp(-R_t) 1_{\crl{t < \tau}}}{
\Ex{\exp(-R_t) 1_{\crl{t < \tau}}}}.
$$
So, up to a constant, $y \mapsto h_{t,x}(i y)$ is the characteristic function of
$s_t + R_t + \Lambda_t$ with respect to $\q_x$. From Lemmas \ref{lemma1} and \ref{lemma2} we know that
$h_{t,x}(iy) = l_{t,x}(iy)$ for all $y \in \mathbb{R}$ and $V^0_t$ is an open
neighborhood of $0$ in $\C$ on which $l_{t,x}$ is analytic. So it follows from
Theorem 7.1.1 of Lukacs \cite{Lukacs} that $U_{t,x}$ contains the strip
$\crl{z \in \mathbb{C} : \Re(z) \in I_t}$ and is open. Since
$h_{t,x}(iy) = l_{t,x}(iy)$ for all $y \in \mathbb{R}$, $V^0_t$ is connected
and both functions are analytic, one also has
$h_{t,x}(z) = l_{t,x}(z)$ for $z \in U_{t,x} \cap V^0_t$.
\end{proof}

\begin{Remark}
The price of a corporate zero coupon bond with no recovery in the case of default is given by
$$
P_{t,x}= \Ex{e^{-R_t}1_{\{S_t>0\}}} = h_{t,x}(0).$$
The price of a government bond is equal to the price of
a corporate bond in a model where $(c,\gamma) = 0$ (that is, the probability of
default is zero).
\end{Remark}

\begin{corollary} \label{mp}
The condition
\be \label{martcond}
G_i(\varepsilon,0,1)=0, \quad i =0, \dots, m, \quad \mbox{and} \quad \beta_{\cJ\cJ}=0,
\ee
is sufficient for the discounted stock price $\exp(s_t + \Lambda_t) 1_{\crl{t < \tau}}$
to be a martingale with respect to all $\P_x$, $x \in D$. If all components of
$\varepsilon_{\cJ}$ are different from $0$, then \eqref{martcond} is also necessary.
\end{corollary}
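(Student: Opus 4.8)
The plan is to reduce the martingale property to a statement about the function $l_{t,x}$ and then differentiate the Riccati system. Since the discounted stock price is $\exp(s_t+\Lambda_t)1_{\crl{t<\tau}} = \exp(R_t)\,h$-type quantity evaluated at $z=1$, the natural first step is to observe that, by the tower property and time-homogeneity of $X$, the process $M_t := \exp(s_t+\Lambda_t)1_{\crl{t<\tau}}$ is a $\P_x$-martingale if and only if $\E_x[M_t] = \E_x[M_0] = \exp(e + \ang{\varepsilon,x})$ for all $t$ and $x$, together with the analogous identity started from $X_{t_0}$. Now $\E_x[M_t] = \E_x[\exp(\ang{\varepsilon,X_t} + \Lambda_t)1_{\crl{t<\tau}}]\exp(e) = \exp(e)\exp(A(t,\varepsilon,0,1)+\ang{B(t,\varepsilon,0,1),x})$ by \eqref{fDFS}, at least once we check $1 \in V_t$; note $h_{t,x}(1)/\exp(-R_t)\cdot(\dots)$ relates this to $l_{t,x}(1)$ via Theorem \ref{thmmain}, but it is cleaner to work directly with $(A,B)$ evaluated at $(u,v,w)=(\varepsilon,0,1)$ since condition (vii) guarantees $G_i$ analytic on all of $\C^{n+2}$ and hence the solution is well-defined for small $t$. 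The martingale property is then equivalent to $A(t,\varepsilon,0,1)=0$ and $B(t,\varepsilon,0,1)=\varepsilon$ for all $t\ge 0$ (the $\cJ$-components of $B$ automatically require $\exp(\beta^T_{\cJ\cJ}t)\varepsilon_\cJ = \varepsilon_\cJ$).

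The second step is to identify when the constant curve $t\mapsto(A,B_\cI)=(0,\varepsilon_\cI)$ solves \eqref{RicA} with $(v,w)=(0,1)$. Plugging a constant into the ODE system, this holds if and only if $\partial_t A = G_0(\varepsilon,0,1)=0$ and $\partial_t B_\cI = G(\varepsilon,0,1)=0$, i.e. $G_i(\varepsilon,0,1)=0$ for $i=0,\dots,m$; for the $\cJ$-block, $B_\cJ(t,\varepsilon,0,1)=\exp(\beta^T_{\cJ\cJ}t)\varepsilon_\cJ$ is constantly equal to $\varepsilon_\cJ$ for all $t$ iff $\exp(\beta^T_{\cJ\cJ}t)$ fixes $\varepsilon_\cJ$. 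When $\beta_{\cJ\cJ}=0$ this is automatic; this explains the second half of \eqref{martcond}. By uniqueness in the Picard--Lindel\"of theorem (applicable since the $G_i$ are locally Lipschitz), if the constant curve solves the system then it is the solution, so $l_{t,x}(1)=\exp(e+\ang{\varepsilon,x})$ for all $t,x$, which gives the martingale property after invoking \eqref{fDFS} and the Markov property of $X$ to get the conditional version. This proves sufficiency.

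For necessity, suppose $M_t$ is a $\P_x$-martingale for all $x\in D$. Then $\E_x[M_t]=\exp(e+\ang{\varepsilon,x})$ for all $t\ge0$ and all $x\in D$, which by \eqref{fDFS} forces $A(t,\varepsilon,0,1)+\ang{B(t,\varepsilon,0,1),x} = \ang{\varepsilon,x}$ for all $x\in D$. Since $D=\R^m_+\times\R^{n-m}$ has nonempty interior, matching the affine functions of $x$ on both sides yields $A(t,\varepsilon,0,1)=0$ and $B(t,\varepsilon,0,1)=\varepsilon$ for every $t\ge 0$. Differentiating at $t=0$ gives $G_0(\varepsilon,0,1)=\partial_t A|_{t=0}=0$ and $G_i(\varepsilon,0,1)=\partial_t B_i|_{t=0}=0$ for $i\in\cI$; and $\partial_t B_\cJ|_{t=0} = \beta^T_{\cJ\cJ}\varepsilon_\cJ = 0$. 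The hard part is extracting $\beta_{\cJ\cJ}=0$ from $\beta^T_{\cJ\cJ}\varepsilon_\cJ=0$: this is exactly where the hypothesis that all components of $\varepsilon_\cJ$ are nonzero enters, but that alone only kills one vector in the kernel, not the whole matrix. I expect the genuine content here is that $B_\cJ(t,\varepsilon,0,1)=\exp(\beta^T_{\cJ\cJ}t)\varepsilon_\cJ\equiv\varepsilon_\cJ$ must hold \emph{for all} $t\ge0$, not just infinitesimally, so $\exp(\beta^T_{\cJ\cJ}t)\varepsilon_\cJ = \varepsilon_\cJ$ for every $t$; differentiating repeatedly gives $(\beta^T_{\cJ\cJ})^k\varepsilon_\cJ = 0$ for all $k\ge1$. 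Combined with the structural constraints on $\beta$ from (iv) --- in particular $\beta_{\cI\cJ}=0$ so the $\cJ\cJ$-block evolves autonomously --- and the sign/support conditions, the nonvanishing of every coordinate of $\varepsilon_\cJ$ should be exactly enough to conclude $\beta_{\cJ\cJ}=0$; I would carry out this linear-algebra argument as the final and most delicate step, and I anticipate it is where one must be careful about which entries of $\beta_{\cJ\cJ}$ are a priori sign-constrained versus free.
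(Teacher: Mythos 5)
Your sufficiency argument --- check that condition \eqref{martcond} makes the constant curve $(A,B_\cI,B_\cJ)=(0,\varepsilon_\cI,\varepsilon_\cJ)$ solve the Riccati system \eqref{RicA} with $(u,v,w)=(\varepsilon,0,1)$, conclude $h_{t,x}(1)=h_{0,x}(1)$ for small $t$ via Theorem~\ref{thmmain}, then propagate with the Markov property --- is essentially the paper's proof and is correct.

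Your instinct about the necessity direction is exactly right, and the gap you flagged is a genuine defect that neither your sketch nor the paper's proof closes. From $\exp(\beta^T_{\cJ\cJ}t)\varepsilon_\cJ=\varepsilon_\cJ$ for all $t$ one can only deduce $\beta^T_{\cJ\cJ}\varepsilon_\cJ=0$, which is strictly weaker than $\beta_{\cJ\cJ}=0$ when $\cJ$ has more than one element. Neither escape route you propose works: differentiating repeatedly gives $(\beta^T_{\cJ\cJ})^k\varepsilon_\cJ=0$ for $k\ge1$, but this is already implied by the case $k=1$ and adds nothing; and condition (iv) constrains only $\beta_{\cI\cJ}$ (forced to vanish) and the off-diagonal of $\beta_{\cI\cI}$ (forced nonnegative), saying nothing whatsoever about the $\cJ\cJ$-block, so there is no sign or support structure to exploit there. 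A concrete counterexample: take $n=2$, $m=0$, $\varepsilon=(1,1)$, $\beta=\left(\begin{smallmatrix}1&1\\-1&-1\end{smallmatrix}\right)$, no jumps, and choose $a,b$ so that $\ang{a\varepsilon,\varepsilon}+\ang{b,\varepsilon}=0$. Then $\beta^T_{\cJ\cJ}\varepsilon_\cJ=0$ and $G_0(\varepsilon,0,1)=0$, so the discounted stock price is a martingale, yet $\beta_{\cJ\cJ}\neq0$ even though $\varepsilon_\cJ$ is componentwise nonzero. The correct necessary condition is $G_i(\varepsilon,0,1)=0$ for $i=0,\dots,m$ together with $\beta^T_{\cJ\cJ}\varepsilon_\cJ=0$; this coincides with \eqref{martcond} only when $n-m=1$. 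The paper's own proof of necessity is a one-sentence assertion that glosses over precisely this point, so you should not expect to be able to complete the step as stated; your diagnosis of the gap is accurate, and the mistake in your proposal is only the hope that (iv) or iterated differentiation rescues the stronger claim.
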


\begin{proof}
It follows from Lemma \ref{lemma2} that there exists a $t_0 > 0$ such that
$p \in V^0_t$ for all $(t,p) \in [0,t_0] \times [0,1]$. So if \eqref{martcond} holds,
one obtains from Theorem \ref{thmmain} and \eqref{RicA} that
$$
h_{t,x}(1) = \exp(e + A(t,\varepsilon,0,1)+\left<B(t,\varepsilon,0,1),x\right>)
= h_{0,x}(1) \quad \mbox{for all } (t,x) \in [0,t_0] \times D.
$$
Now the martingale property of $\exp(s_t + \Lambda_t) 1_{\crl{t < \tau}}$
with respect to all $\P_x$, $x \in D$, follows by decomposing a given interval $[0,t]$
into finitely many intervals of length smaller than $t_0$ and taking iterative conditional expectations.
If all components of $\varepsilon_{\cJ}$ are
different from $0$ and condition \eqref{martcond} is violated, there exist $t \le t_0$ and
$x \in D$ such that $h_{t,x}(1) \neq h_{0,x}(1)$. So the martingale property cannot
hold under $\p_x$.
\end{proof}

\subsection{Pricing via Fourier transform}
\label{sub:fft}

In this subsection we show how to extend Fourier pricing methods from
Carr and Madan \cite{CarrMadan} and Lee \cite{Lee} to our setting.
Since $\p_x$ is used as pricing measure, one would typically work with
models in which the discounted stock price $\exp(s_t + \Lambda_t) 1_{\crl{t < \tau}}$
is a martingale under $\p_x$. However, the following results just involve
$S_t$ for fixed $t \ge 0$ and technically
do not need the discounted stock price to be a martingale.

Consider a call option with log strike $k$ and price
$$
c_{t,x}(k) = \Ex{e^{-R_t} \brak{S_t - e^k}^+}.
$$

\begin{theorem}
\label{thmFFT}
Let $(t,x) \in \mathbb{R}_+ \times D$ and $p > 0$ such that $p+1 \in U_{t,x}$. Then
\be \label{fft}
c_{t,x}(k) = \frac{e^{-p k}}{2 \pi}
\int_\R e^{-iyk} g_c(y) dy = \frac{e^{-p k}}{\pi}
\int_0^{\infty} \Re\left(e^{-iyk} g_c(y)\right) dy,
\ee
where
$$
g_c(y) = \frac{h_{t,x}(p+1+iy)}{p^2 + p - y^2 + iy(2p+1)}.
$$
\end{theorem}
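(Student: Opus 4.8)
The plan is to follow the classical Carr--Madan argument, which reduces the pricing of a call to a Fourier inversion after introducing an exponential damping factor. The key point that makes everything work here is Theorem \ref{thmmain}, which guarantees that $h_{t,x}$ is finite and analytic on the strip $\{z : \Re(z) \in I_t\}$; in particular the hypothesis $p+1 \in U_{t,x}$ ensures the damped payoff is integrable.

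First I would define the damped call price $\tilde c_{t,x}(k) := e^{pk} c_{t,x}(k)$ and check that $k \mapsto \tilde c_{t,x}(k)$ is in $L^1(\R)$. For $k \to +\infty$ this is immediate since $c_{t,x}(k) \to 0$ at least like $e^{-k}$ (the payoff $(S_t - e^k)^+$ is bounded by $S_t$ and vanishes once $e^k$ exceeds $S_t$), and $p>0$ gives decay of $e^{pk}c_{t,x}(k)$; more precisely, $e^{pk}(S_t-e^k)^+ \le e^{pk} S_t 1_{\{S_t > e^k\}} \le S_t^{p+1}$, and $\Ex{e^{-R_t} S_t^{p+1}} = h_{t,x}(p+1) < \infty$ by the assumption $p+1 \in U_{t,x}$, so dominated convergence applies. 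For $k \to -\infty$, $c_{t,x}(k)$ is bounded (by the bond price times a constant, or just by $\Ex{e^{-R_t} S_t}$ if that is finite, and in any case $c_{t,x}(k) \le \Ex{e^{-R_t}S_t^{p+1}}e^{-pk}$ trivially fails, so one uses instead $(S_t - e^k)^+ \le S_t$ combined with... ) — here I would simply note $e^{pk} c_{t,x}(k) \le e^{pk}\, \Ex{e^{-R_t} S_t} \to 0$ as $k \to -\infty$ since $p > 0$, provided $\Ex{e^{-R_t}S_t} < \infty$, which follows from $1 \le p+1 \in U_{t,x}$ and convexity of $\Re(z) \mapsto h_{t,x}$ on the strip. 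Then I would compute the Fourier transform $\int_\R e^{iyk} \tilde c_{t,x}(k)\,dk$: substituting the definition of $c_{t,x}$, applying Fubini (justified by the integrability just established), and evaluating the inner integral $\int_{-\infty}^{\log S_t} e^{(p+iy)k}(S_t - e^k)\,dk$ explicitly, one gets $S_t^{p+1+iy}/\big((p+iy)(p+1+iy)\big)$. Taking $\Ex{e^{-R_t}\cdot}$ and recognizing $\Ex{e^{-R_t} S_t^{p+1+iy}} = h_{t,x}(p+1+iy)$ yields $\int_\R e^{iyk}\tilde c_{t,x}(k)\,dk = h_{t,x}(p+1+iy)/\big((p+iy)(p+1+iy)\big)$, and expanding the denominator gives $p^2 + p - y^2 + iy(2p+1)$, i.e.\ exactly $g_c(y)$.

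Next I would invoke Fourier inversion: since $\tilde c_{t,x} \in L^1$ and (one checks) $g_c \in L^1(\R)$ as well — this uses that $h_{t,x}(p+1+iy)$ is bounded in $y$, being the characteristic function of a finite measure up to a constant, while the denominator grows like $y^2$ — the inversion formula gives $\tilde c_{t,x}(k) = \frac{1}{2\pi}\int_\R e^{-iyk} g_c(y)\,dy$, hence $c_{t,x}(k) = \frac{e^{-pk}}{2\pi}\int_\R e^{-iyk}g_c(y)\,dy$, which is the first equality in \eqref{fft}. The second equality follows from the Hermitian symmetry $g_c(-y) = \overline{g_c(y)}$: since $c_{t,x}(k)$ is real, $e^{-iyk}g_c(y) + e^{iyk}g_c(-y) = 2\Re(e^{-iyk}g_c(y))$, so folding the integral over $(-\infty,0)$ onto $(0,\infty)$ halves it and keeps only the real part.

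The main obstacle, I expect, is the justification of integrability and the applicability of Fourier inversion rather than any single computation. Specifically, showing $g_c \in L^1$ near $y=0$ requires care: the denominator $p^2+p-y^2+iy(2p+1)$ does not vanish at $y=0$ precisely because $p>0$ (it equals $p^2+p = p(p+1) \neq 0$), so there is no singularity there — this is exactly why the damping parameter $p$ must be strictly positive and why one dampens at $p+1$ rather than at $1$. The large-$y$ decay of $g_c$ is only $O(1/y^2)$, which is enough for $L^1$ but means pointwise inversion should be justified via the $L^1$-theory (continuity of $\tilde c_{t,x}$, which holds by dominated convergence in $k$, plus $\tilde c_{t,x}, \hat{\tilde c}_{t,x} \in L^1$) rather than naively. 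Everything else — Fubini, the elementary integral, the symmetry argument — is routine.
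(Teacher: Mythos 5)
You take a genuinely different analytic route from the paper. The paper establishes that the damped call price $e^{pk}c_{t,x}(k)$ lies in $L^2(\mathbb{R})$ by invoking Corollary~2.2 of Lee's moment formula together with the openness of $U_{t,x}$: since $p+1+\eta \in U_{t,x}$ for some $\eta>0$, one gets the explicit decay $c_{t,x}(k)=O(e^{-(p+\eta)k})$ as $k\to\infty$, and the inversion is then carried out in the $L^2$/Plancherel framework. You instead aim for the classical $L^1$ inversion theorem, verifying $\tilde c_{t,x}\in L^1$ and $g_c\in L^1$ directly; your check that $g_c\in L^1$ (numerator bounded by $h_{t,x}(p+1)$, denominator nonvanishing at $y=0$ because $p>0$ and growing like $y^2$) is correct and in fact a bit more self-contained than the paper's. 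The computation of $g_c$ by Fubini and the Hermitian-symmetry argument for the second equality in \eqref{fft} are identical to the paper's.

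However, your justification that $\int_{\mathbb{R}} e^{pk}c_{t,x}(k)\,dk<\infty$ has a gap at $k\to+\infty$. The bound $e^{pk}(S_t-e^k)^+\le S_t^{p+1}$ is uniform in $k$, and dominated convergence only yields $e^{pk}c_{t,x}(k)\to 0$ pointwise, which does not by itself give integrability (e.g.\ if $p+1$ were on the boundary of $U_{t,x}$, the decay could be $O(1/k^2)$ which is not furnished by your argument). Two clean fixes: (a) compute directly by Tonelli, $\int_{\mathbb{R}} e^{pk}c_{t,x}(k)\,dk = \E_x\bigl[e^{-R_t}\int_{-\infty}^{\log S_t}e^{pk}(S_t-e^k)\,dk\,\mathbf{1}_{\{t<\tau\}}\bigr] = \tfrac{h_{t,x}(p+1)}{p(p+1)}<\infty$; or (b) do what the paper does and use openness of $U_{t,x}$ (Theorem~\ref{thmmain}) plus Lee's moment formula to get the decay rate $O(e^{-(p+\eta)k})$. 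With either fix your $L^1$ argument is complete and valid.
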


\begin{proof}
Since $U_{t,x}$ is open, there exists $\eta >0$ such that
$p+1 + \eta \in U_{t,x}$, and it follows from Corollary 2.2
in Lee \cite{MomentFormula} that
$$c_{t,x}(k)=O(e^{-(p+\eta)k}) \quad \mbox{for } k \to \infty.$$
In particular,
$$
\int_{\R} \brak{e^{p k} c_{t,x}(k)}^2 dk < \infty.
$$
It follows that the Fourier transform
$$
g_c(y) = \int_{\mathbb{R}} e^{iyk} e^{p k} c_{t,x}(k) dk
$$
is square-integrable in $y$, and one can transform back to obtain
$$
e^{p k} c_{t,x}(k) = \frac{1}{2 \pi} \int_{\mathbb{R}} e^{- iyk} g_c(y) dy.
$$
This shows the first equality of \eqref{fft}. Since
$c_{t,x}(k)$ is real-valued, one has $g_c(-y) = \overline{g_c(y)}$,
which implies the second inequality of \eqref{fft}. To conclude the proof, set
$z = p +iy$ and note that
\beas
g_c(y) &=& \int_{\mathbb{R}} e^{zk} \Ex{ \brak{e^{s_t + \Lambda_t} - e^{k-R_t}}
1_{\crl{s_t + R_t + \Lambda_t \ge k,\; t< \tau}}} dk\\
&=& \Ex{\int_{-\infty}^{s_t + R_t + \Lambda_t}
\brak{e^{zk + s_t + \Lambda_t} - e^{(z+1) k -R_t}} dk 1_{\crl{t< \tau}}}\\
&=& \frac{1}{z(z+1)} \Ex{\exp(-R_t+(z+1)\log S_t) 1_{\crl{t< \tau}}}\\
&=& \frac{1}{z(z+1)} h_{t,x}(z+1).
\eeas
\end{proof}

To calculate prices of options with short maturities or extreme
out-of-the money strikes Carr and Madan \cite{CarrMadan} suggest an alternative
method which does not suffer from high oscillations. To adapt it to our setup, define
$$
d_{t,x}(k) :=
\begin{cases}
\Ex{e^{-R_t} \brak{e^k - S_t}^+1_{\{t<\tau\}}} & \textrm{if }k<\log S_0\\ \\
\Ex{e^{-R_t} \brak{S_t - e^k}^+} & \textrm{if }k>\log S_0
\end{cases}.
$$
Then the following holds:

\begin{theorem}
Let $(t,x) \in \mathbb{R}_+ \times D$ and $p>0$ such that $1-p,\;1+p \in U_{t,x}$. Then
\begin{align}
\label{CM2}
d_{t,x}(k) = \frac{1}{\sinh(p k)} \frac{1}{2 \pi} \int_{\mathbb{R}} e^{-iyk} g_d(y)dy,
\end{align}
where
$$
g_d(y)= \frac{f(y-ip)-f(y+ip)}{2}
$$
and
$$
f(y) = \frac{\exp((1+iy)s_0)}{1+iy}h_{t,x}(0)-\frac{\exp(iys_0)}{iy}h_{t,x}(1)
-\frac{h_{t,x}(1+iy)}{y^2-iy}.
$$
\end{theorem}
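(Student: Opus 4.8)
The plan is to adapt Carr and Madan's \cite{CarrMadan} second method, in the spirit of the proof of Theorem~\ref{thmFFT}: compute the Fourier transform in $k$ of the damped function $\sinh(pk)\,d_{t,x}(k)$, identify it with $g_d$, and then invert. Write $\ell_t:=s_t+R_t+\Lambda_t$, so that $\log S_t=\ell_t$ on $\crl{t<\tau}$ and $\log S_0=s_0$, and for $v\in\C$ with $\abs{\Im v}\le p$ set $\gamma_t(v):=\int_{\R}e^{ivk}\,d_{t,x}(k)\,dk$. I will show that $\gamma_t(v)$ is well defined and equals the function $f$ of the statement (whose apparent poles at $v\in\crl{0,i}$ are removable). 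Granting this, splitting $\sinh(pk)=\tfrac12(e^{pk}-e^{-pk})$ and noting $e^{(iy\pm p)k}=e^{i(y\mp ip)k}$ gives, by linearity of the (absolutely convergent) integral,
$$
\int_\R e^{iyk}\sinh(pk)\,d_{t,x}(k)\,dk=\tfrac12\brak{\gamma_t(y-ip)-\gamma_t(y+ip)}=\tfrac12\brak{f(y-ip)-f(y+ip)}=g_d(y),
$$
and Fourier inversion then yields $\sinh(pk)\,d_{t,x}(k)=\tfrac{1}{2\pi}\int_\R e^{-iyk}g_d(y)\,dy$, which is \eqref{CM2} after dividing by $\sinh(pk)$ (for $k\ne0$).

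The identity $\gamma_t(v)=f(v)$ is a Fubini computation. Split the $k$-integral at $s_0$: for $k<s_0$ one has $d_{t,x}(k)=\Ex{e^{-R_t}(e^k-e^{\ell_t})^+1_{\crl{t<\tau}}}$, and for $k>s_0$ one has $d_{t,x}(k)=\Ex{e^{-R_t}(e^{\ell_t}-e^k)^+1_{\crl{t<\tau}}}$ (on $\crl{t\ge\tau}$ the call payoff vanishes, so the indicator may be inserted freely). Writing $(e^k-e^{\ell_t})^+=1_{\crl{\ell_t<k}}\int_{\ell_t}^{k}e^{\kappa}\,d\kappa$ and $(e^{\ell_t}-e^k)^+=1_{\crl{k<\ell_t}}\int_{k}^{\ell_t}e^{\kappa}\,d\kappa$ and interchanging the expectation with the $dk$-integral, the region $\crl{k<s_0}$ contributes $\Ex{e^{-R_t}1_{\crl{\ell_t<s_0,\,t<\tau}}\,M(\ell_t)}$ and the region $\crl{k>s_0}$ contributes $\Ex{e^{-R_t}1_{\crl{\ell_t>s_0,\,t<\tau}}\,M(\ell_t)}$, where a short calculation shows that the two a priori different brackets coincide with the single kernel
$$
M(\ell)=\frac{e^{(1+iv)\ell}-e^{\ell+ivs_0}}{iv}-\frac{e^{(1+iv)\ell}-e^{(1+iv)s_0}}{1+iv},
$$
and that $M(s_0)=0$. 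Hence the two contributions add up to $\Ex{e^{-R_t}1_{\crl{t<\tau}}\,M(\ell_t)}$. Expanding $M$ and recognising $\Ex{e^{-R_t}e^{(1+iv)\ell_t}1_{\crl{t<\tau}}}=h_{t,x}(1+iv)$, $\Ex{e^{-R_t}e^{\ell_t}1_{\crl{t<\tau}}}=h_{t,x}(1)$ and $\Ex{e^{-R_t}1_{\crl{t<\tau}}}=h_{t,x}(0)$, together with $\tfrac1{iv}-\tfrac1{1+iv}=-\tfrac1{v^2-iv}$, yields $\gamma_t(v)=f(v)$.

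What remains is to justify finiteness and the interchanges, and both follow from $U_{t,x}$ being an open vertical strip (or half-plane) containing $i\R$, by Theorem~\ref{thmmain}. The $h_{t,x}$-values occurring are finite because $0,1\in U_{t,x}$ and $1\pm p\in U_{t,x}$ by hypothesis, hence also $1+iv\in U_{t,x}$ for $\abs{\Im v}\le p$. By openness choose $\eta>0$ with $1+p+\eta,\,1-p-\eta\in U_{t,x}$. Exactly as in the proof of Theorem~\ref{thmFFT}, Lee's moment formula (Corollary~2.2 of \cite{MomentFormula}) gives $c_{t,x}(k)=O(e^{-(p+\eta)k})$ as $k\to\infty$, while a Markov bound against the $(1-p-\eta)$-moment of $S_t$ gives $\Ex{e^{-R_t}(e^k-S_t)^+1_{\crl{t<\tau}}}=O(e^{(p+\eta)k})$ as $k\to-\infty$ (the crude bound $O(e^k)$ already suffices when $p<1$). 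Hence $e^{pk}d_{t,x}(k)$ and $e^{-pk}d_{t,x}(k)$ decay exponentially at $\pm\infty$, so $e^{ivk}d_{t,x}(k)\in L^1(\R)$ for $\abs{\Im v}\le p$ (which justifies the interchanges above) and $\sinh(pk)\,d_{t,x}(k)\in L^1\cap L^2(\R)$; the latter makes $g_d$ square-integrable, so the Fourier inversion holds in the $L^2$ sense and pointwise at every $k\ne0$ at which $d_{t,x}$ is continuous, the inversion integral being read as a symmetric principal value since $g_d(y)=O(\abs{y}^{-1})$ at infinity. The only real work is this bookkeeping --- the two-region Fubini computation, the algebraic collapse of the put and call brackets to the common kernel $M$, and the assembly of the two tail estimates --- and, as noted just before the statement, the martingale property of the discounted stock price is used nowhere.
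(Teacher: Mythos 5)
Your argument is correct and takes essentially the same route as the paper: damp $d_{t,x}$ by $\sinh(pk)$, compute the Fourier transform by establishing $\int_\R e^{ivk}d_{t,x}(k)\,dk=f(v)$ on the strip $\abs{\Im v}\le p$, and invert. You simply fill in the computation the paper dismisses with ``one easily checks'' (the split at $s_0$, Fubini, and the algebraic collapse of the put and call regions to the common kernel $M$ with $M(s_0)=0$), and you use a Markov bound for the put tail where the paper invokes Lee's corollary for both tails, but these are elaborations of the same plan rather than a different proof.
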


\begin{proof}
Since $U_{t,x}$ is open, there exists $\eta > 0$ such that
$1 - p - \eta$ and $1+ p + \eta$ belong to $U_{t,x}$.
By Corollary 2.2 of Lee \cite{MomentFormula},
$$
d_{t,x}(k)=O(e^{-(p+\eta) |k|}) \quad \mbox{for } k \to \pm \infty.
$$
In particular, $d_{t,x}(k)$ and $\sinh(pk) d_{t,x}(k)$ are both
square-integrable in $k$. One easily checks that
$$
\int_{\mathbb{R}} e^{iyk} d_{t,x}(k) dk = f(y).
$$
So,
$$
\int_{\mathbb{R}} e^{iyk} \sinh(pk) d_{t,x}(k) dk
= \int_\R e^{iyk} \frac{e^{pk}-e^{-pk}}{2} d_{t,x}(k)dk\\
= \frac{f(y-ip)-f(y+ip)}{2},
$$
and hence,
$$
d_{t,x}(k) = \frac{1}{\sinh(pk)} \frac{1}{2 \pi} \int_{\mathbb{R}} e^{-iyk} g_d(y) dy.
$$
\end{proof}

There exist several extensions of the Fourier pricing methods of Carr and Madan \cite{CarrMadan}.
For example, Lee \cite{Lee} shows that \eqref{fft} can be adjusted in order to allow for $p<0$
and derives pricing formulas for other European derivatives, such as binary or asset-or-nothing options.
By adjusting the proof of Theorem 5.1 of \cite{Lee} to our setup, one obtains the following: Denote
$$
a_{t,x}(k) = \Ex{e^{-R_t}S_t 1_{\{\log S_t>k\}}} \quad \mbox{and} \quad
b_{t,x}(k)= \Ex{e^{-R_t} 1_{\{\log S_t>k\}}}.
$$
Then
$$
c_{t,x}(k)=a_{t,x}(k)-e^kb_{t,x}(k),
$$
and one has

\begin{theorem}
Let $(t,x) \in \mathbb{R}_+ \times D$ and
$p,\;q\in\R$ such that $p+1,\;q\in U_{t,x}$. Then
\begin{align*}
a_{t,x}(k) = R_a(p) + \frac{e^{-p k}}{\pi}
\int_0^{\infty} \Re\left(e^{-iyk} g_a(y)\right) dy\\
b_{t,x}(k) = R_b(q) + \frac{e^{-q k}}{\pi}
\int_0^{\infty} \Re\left(e^{-iyk} g_b(y)\right) dy\\
c_{t,x}(k) = R_c(p) + \frac{e^{-p k}}{\pi}
\int_0^{\infty} \Re\left(e^{-iyk} g_c(y)\right) dy,
\end{align*}
where
\begin{align*}
g_a(y) = -\frac{h_{t,x}(p+1+iy)}{p+iy}, \quad
g_b(y) = \frac{h_{t,x}(q+iy)}{q+iy}, \quad
g_c(y) = \frac{h_{t,x}(p+1+iy)}{p^2 + p - y^2 + iy(2p+1)}
\end{align*}
and
$$
R_a(p) =
\begin{cases}
h_{t,x}(1) & \textrm{if }p<0\\
\frac{h_{t,x}(1)}{2} & \textrm{if }p=0\\
0 & \textrm{if }p>0
\end{cases}, \quad
R_b(q) =
\begin{cases}
h_{t,x}(0) & \textrm{if }q<0\\
\frac{h_{t,x}(0)}{2} & \textrm{if }q=0\\
0 & \textrm{if }q>0
\end{cases}
$$
$$
R_c(p) =
\begin{cases}
h_{t,x}(1)-e^kh_{t,x}(0) & \textrm{if }p<-1\\
h_{t,x}(1)-\frac{e^kh_{t,x}(0)}{2} & \textrm{if }p=-1\\
h_{t,x}(1) & \textrm{if }-1<p<0\\
\frac{h_{t,x}(1)}{2} & \textrm{if }p=0\\
0 & \textrm{if }p>0\\
\end{cases}.
$$
\end{theorem}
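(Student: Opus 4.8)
The idea is to mimic the proof of Theorem~\ref{thmFFT} — equivalently, the proof of Theorem~5.1 of Lee~\cite{Lee} — for each of the three payoffs. Fix $(t,x)\in\mathbb R_+\times D$ and put $L_t:=s_t+R_t+\Lambda_t$, so that $\log S_t=L_t$ on $\crl{t<\tau}$ while $1_{\crl{\log S_t>k}}=0$ on $\crl{t\ge\tau}$. Each of $a_{t,x}(k),b_{t,x}(k),c_{t,x}(k)$ has the form $\Ex{e^{-R_t}\Phi(S_t)1_{\crl{L_t>k}}}$; it is monotone in $k$, vanishes as $k\to+\infty$, and, as $k\to-\infty$, tends to its value at $k=-\infty$, namely $h_{t,x}(1)$, $h_{t,x}(0)$ and $h_{t,x}(1)-e^kh_{t,x}(0)$, respectively.

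First I would obtain the tail estimates. Since $U_{t,x}$ is open by Theorem~\ref{thmmain}, the hypothesis $q\in U_{t,x}$ (resp.\ $p+1\in U_{t,x}$) leaves room on both sides, so one can choose $\eta>0$ with a slightly enlarged or shrunk exponent still in $U_{t,x}$ and apply Corollary~2.2 of Lee~\cite{MomentFormula}, exactly as in the proof of Theorem~\ref{thmFFT}. This gives exponential decay as $k\to+\infty$ of the prices themselves when the damping parameter is positive, and exponential decay as $k\to-\infty$ of the complementary ``put-type'' quantities — e.g.\ $h_{t,x}(0)-b_{t,x}(k)=\Ex{e^{-R_t}1_{\crl{L_t\le k,\;t<\tau}}}$, and likewise $h_{t,x}(1)-a_{t,x}(k)$ — when it is negative. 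Hence, for a damping parameter on the appropriate side of $0$, the dampened price (or, on the other side, the dampened complementary quantity) belongs to $L^1\cap L^2(\mathbb R,dk)$.

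Next I would compute the Fourier transform of this dampened function by Fubini's theorem and the definition of $h_{t,x}$: for instance, for $q>0$,
$$
\int_{\mathbb R}e^{iyk}e^{qk}b_{t,x}(k)\,dk
=\Ex{e^{-R_t}1_{\crl{t<\tau}}\int_{-\infty}^{L_t}e^{(q+iy)k}\,dk}
=\frac{h_{t,x}(q+iy)}{q+iy}=g_b(y),
$$
and the analogous computations — using, in the negative-damping regime, the complementary quantities, and, for the call, the factorization $p^2+p-y^2+iy(2p+1)=(p+iy)(p+1+iy)$ from the proof of Theorem~\ref{thmFFT} — reproduce $g_a$ and $g_c$. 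Inverting the transform (legitimate by the integrability just established) and using that $a_{t,x},b_{t,x},c_{t,x}$ are real-valued, so that $g_\bullet(-y)=\overline{g_\bullet(y)}$ because $\overline{h_{t,x}(z)}=h_{t,x}(\bar z)$, turns $\frac1{2\pi}\int_{\mathbb R}$ into $\frac1{\pi}\int_0^\infty\Re(\cdot)$; the prefactor $e^{-pk}$ (resp.\ $e^{-qk}$) is the de-damping factor, and the additive constant $R_a(p)$ or $R_b(q)$ is precisely the $k=-\infty$ value that had to be subtracted to form the complementary quantity — hence $h_{t,x}(1)$ or $h_{t,x}(0)$ for a negative parameter and $0$ for a positive one. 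A convenient consistency check, which also fixes the relative signs and normalizations, is that the three formulas must agree under $c_{t,x}(k)=a_{t,x}(k)-e^kb_{t,x}(k)$ with $q=p+1$ in the $b$-formula; combining $R_a(p)$ with $e^kR_b(p+1)$ then reproduces the five-case expression for $R_c(p)$, the two thresholds $p=0$ and $p=-1$ corresponding to the poles of the transform at $\Re z=0$ and $\Re z=-1$.

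The remaining case is a borderline damping parameter, $q=0$ or $p=0$, where the inversion contour meets a pole of the transform; this is the only genuinely delicate point. I would treat it as in Lee~\cite{Lee}: indenting the contour by a small half-circle around the pole contributes one half of its residue, which accounts for the factors $\tfrac12$ in $R_a(0)$ and $R_b(0)$ (and, through $c=a-e^kb$, for those in $R_c(0)$ and $R_c(-1)$). Equivalently, for $q=0$ the formula for $b_{t,x}$ is the Gil--Pelaez inversion formula for $L_t$ under the probability $d\q_x/d\p_x\propto e^{-R_t}1_{\crl{t<\tau}}$, whose characteristic function is $h_{t,x}(iy)/h_{t,x}(0)$, and similarly the formula for $a_{t,x}$ at $p=0$ is Gil--Pelaez for $L_t$ under $d\q_x'/d\p_x\propto e^{-R_t}S_t1_{\crl{t<\tau}}$. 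The main obstacle throughout is the bookkeeping — keeping track of the integrability of the relevant (possibly complementary) quantity and, for each admissible damping parameter, which pole(s) it lies to the left of — rather than any single estimate; none of the estimates used exceeds those already needed for Theorem~\ref{thmFFT}.
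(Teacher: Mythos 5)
The paper offers no explicit proof of this theorem; it simply states that the result follows ``by adjusting the proof of Theorem~5.1 of Lee~\cite{Lee} to our setup.'' Your proposal carries out precisely that adaptation---damping by $e^{pk}$ or $e^{qk}$, obtaining $L^1\cap L^2$ membership of the dampened (or complementary) quantities from the openness of $U_{t,x}$ together with Corollary~2.2 of Lee~\cite{MomentFormula}, computing the transforms by Fubini, recovering $g_a,g_b,g_c$ and the five-case form of $R_c$ via $c_{t,x}(k)=a_{t,x}(k)-e^kb_{t,x}(k)$, and treating the borderline damping parameters $p=0$ and $q=0$ by Gil--Pelaez / residue indentation---so your approach is correct and matches the one the paper invokes.
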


\subsection{Approximation of general payoffs}
\label{sub:approx}

\subsubsection{Idea}

The prices of European options with payoff functions in the set
$$
L_{t,x} := \crl{\varphi : \mathbb{R}_+ \rightarrow \mathbb{R}
\; \; \textrm{Borel-measurable such that } \Ex{e^{-R_t}|\varphi(S_t)|} < \infty}
$$
can be approximated by portfolios consisting of securities that can either be priced
directly or with Fourier methods. For most purposes it is sufficient to use a mix of
$\varphi(0)$ government bonds, power payoffs and call options. For instance,
let $p_1 < \dots < p_f$ be a set of powers and
$K_1 < \dots < K_g$ finitely many strike prices. Then fix $s^* > 0$ and determine
weights $v_1, \dots, v_f$ and $w_1, \dots, w_g$ by weighted $L^2$-regression:
\be \label{l2}
\mathop{\rm arg\,min}_{v,w}
\int_0^{s^*} \brak{\varphi(s) - \varphi(0)-\sum_{i=1}^f v_i s^{p_i} 1_{\crl{s > 0}} - \sum_{i=1}^g w_i (s-K_i)^+}^2 \rho(s) ds,
\ee
where $\rho$ is a heuristic density approximation of $S_t$. The positive function
$\rho$ is meant to put additional weight on regions where $S_t$ is more likely to lie
(usually in the vicinity of $\E_x \left[S_t\right]$ and if default is possible, around $0$).
If one does not have a good idea of the distribution of $S_t$, one
can also use non-weighted regression ($\rho\equiv1$).

If the integral is discretized, the optimization problem \eqref{l2} becomes a finite-dimensional
$L^2$-regression. To improve the numerical stability, one can first apply Gram--Schmidt orthogonalization
to the basis functions $\varphi(0)$, $s^{p_i} 1_{\crl{s > 0}}$ and $(s-K_i)^+$.

\subsubsection{Example: Truncated log payoff}

We illustrate this method by approximating the price of a truncated log payoff
$\varphi(s) = \log(s) \vee k$, $k \in \mathbb{R}$. Note that since
$S_t=0$ with positive probability, the truncation from below is crucial.

Assume $S_0=1$ and $k = -1$. We consider three ways of approximating
$\varphi$ with linear combinations of 101 instruments:
\begin{enumerate}
\item A government bond and power payoffs of powers $0.05,\;0.1,\dots,5$
\item A government bond and call options with strikes $0.03,\;0.06,\;\dots,3$
\item A government bond, power payoffs of powers $0.1,\;0.2,\dots,5$ and
call options with strikes $0.06,\dots,3$
\end{enumerate}
We let $s^*=3$. As heuristic density for $S_t$ we use $\rho$ given by
\begin{align}
\label{rho}
\rho(x)=\begin{cases}
\exp(-10x) & x<0.5\\
\exp(-10|x-1|) & 0.5\le x\le1.5\\
\exp(-5) & x>1.5
\end{cases}.
\end{align}
This choice of $\rho$ assigns more weight to $0$ and points around $1$.
But it is just an example of what one could choose. Depending on the model and the value of $t$
one may want to use different functions $\rho$.

Figure \ref{fig:approx} shows the errors of the three approximation methods. It can be
seen that for the truncated log payoff methods 2 and 3 give a much better approximation than method 1. 
The errors of methods 2 and 3 are similar. But since prices of power payoffs
are easier to calculate than those of call options, method 3 is significantly faster.

\begin{figure}
\centering
\includegraphics[scale=0.9]{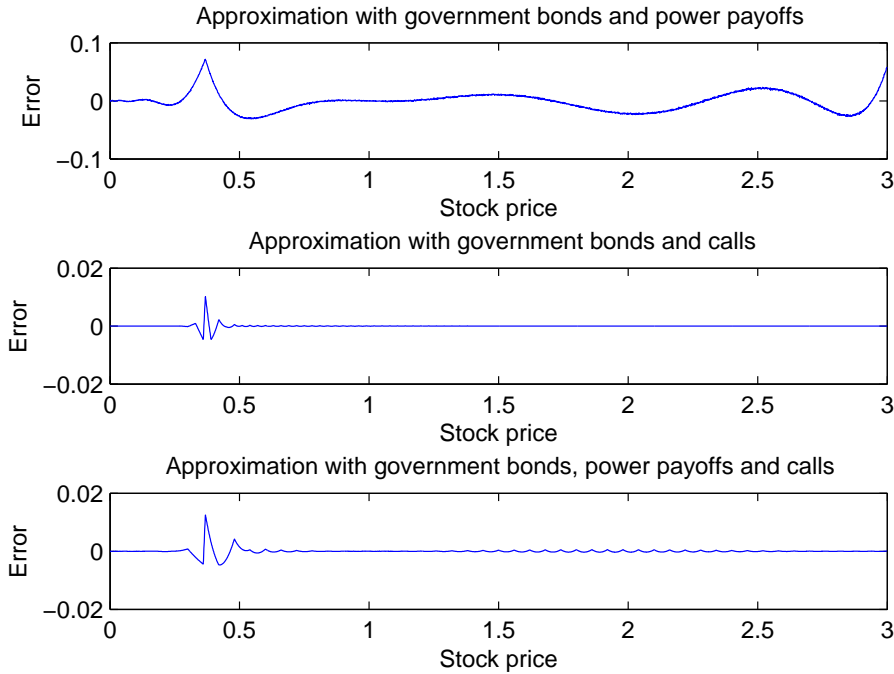}
\caption{Comparison of different approximation methods. Note that the scales are different!}
\label{fig:approx}
\end{figure}

Log payoffs are useful in the pricing and hedging of variance swaps on futures.
Let $F_u$, $0\le u\le t$, be the price of a futures contract on $S_t$ and consider
a variance swap with time-$t$ cash-flow
$$\Sigma_t=\max\left(\frac{1}{t}\sum_{i=1}^I \left(\log\frac{F_{t_i}}{F_{t_{i-1}}}\right)^2-K,C\right),
$$
where $0=t_0<\dots<t_I=t$ are the discrete monitoring points (usually daily),
$K$ is the strike and $C$ is a cap on the payoff (typically $C=2.5K$).
If $F_u$ is modeled as a positive continuous martingale
of the form $dF_u = \sigma_u F_u dW_u$, then the probability of hitting the
cap $C$ is negligible and one can approximate the sum with an integral:
$$
\Sigma_t \approx \frac{1}{t} \int_0^t \sigma^2_udu-K.
$$
It has been noticed by Dupire \cite{Dupire} and Neuberger \cite{Neuberger} that
$$
\int_0^t \sigma^2_u du = 2 \left(\int_0^t \frac{1}{F_u} dF_u - \log S_t + \log F_0 \right),
$$
and therefore,
$$
\Ex{\frac{1}{t} \int_0^t \sigma^2_u du} = \frac{2}{t} \Ex{\log F_0 - \log S_t}.
$$
In a diffusion model with the possibility of default the cap is crucial since it is hit in case
of default. If one neglects the probability that the cap is hit before default or
that $\log S_t < k :=  \log(F_0) - t(C+K)/2$ if there is no default, one can approximate
$\Sigma_t$ as follows:
\beas
\Sigma_t &\approx& 1_{\crl{\tau > t}} \brak{
\frac{1}{t} \int_0^t \sigma^2_u du - K} +1_{\crl{\tau \le t}} C\\
&\approx&
\frac{2}{t} \brak{\int_0^{t \wedge \tau} \frac{1}{F_{u-}} dF_u - (\log (S_t) \vee k) + \log F_0} - K
- 1_{\crl{\tau \le t}} \frac{2}{t} \int_0^{\tau} \frac{1}{F_{u-}} dF_u.
\eeas
In the special case where the default intensity is constant, the expectation of the
last integral is zero, and one can price according to
$$
\Ex{\Sigma_t} \approx \frac{2}{t} \Ex{\log F_0 - \log (S_t \vee k)} - K.
$$

\setcounter{equation}{0}
\section{Hedging}
\label{sec:compl}

In this section we consider a subclass of affine models in which
European contingent claims can perfectly be hedged by dynamically trading in sufficiently many
liquid securities. Assume that condition
\eqref{martcond} is fulfilled. So that the discounted stock price
$\exp(s_t + \Lambda_t) 1_{\crl{t < \tau}}$ is a martingale under all $\p_x$, $x \in D$.
It is well-known that it is impossible to replicate contingents claims with
finitely many hedging instruments
in a model where the underlying has jumps of infinitely many different sizes.
Therefore, we here require the jump measures $\nu$ and $\mu_i$, $i \in {\cal I}$,
to be of the following form:
\bea
\label{nuform}
\nu &=& \sum_{q=1}^{M} v_q \delta_{y_q} \mbox{ for } v_q>0 \mbox{ and different points }
y_1, \dots, y_M \mbox{ in } \in D \backslash \{0\}\\
\label{muform}
\mu_i &=& \sum_{q=1}^{M_i} v_{iq} \delta_{y_{iq}} \mbox{ for } v_{iq}>0
\mbox{ and different points } y_{i1}, \dots, y_{iM_i} \mbox{ in }\in D \backslash\{0\},
\eea
where $\delta_{y_q}$, $\delta_{y_{iq}}$ are Dirac measures and
for $M = 0$ or $M_i = 0$, $\nu$ or $\mu_i$ are understood to be zero, respectively.
The integrability condition (vii) is then trivially satisfied.

In the following theorem we are going to
show that the process $X$ has a realization as the unique strong solution of an SDE of the from
\be \label{SDEjumps}
dX_t = \tilde{b}(X_t) dt + \sigma(X_t) dW_t + \int_{\mathbb{R}_+} k(X_{t-},z) N(dt,dz),
\ee
where $W$ is an $n$-dimensional Brownian motion and $N$ an independent
Poisson random measure on $\mathbb{R}^2_+$ with Lebesgue measure as intensity measure.
$\sigma : D \to \mathbb{R}^{n \times n}$ has to be
a measurable function satisfying
\be \label{sigma}
\sigma \sigma^T(x) = a + \sum_{i \in {\cal I}} x_i \alpha_i \quad \mbox{for all $x \in D,$}
\ee
and the functions $\tilde{b} : D \to \mathbb{R}^n$ and
$k : D \times \mathbb{R}_+ \to D$ are of the following form:
$$
\tilde{b}(x) = b + \beta x - \int_{D \setminus \crl{0}} \chi_{\cal J}(\xi) \nu(d \xi)
- \sum_{i=1}^m x_i \int_{D \setminus \crl{0}} \chi_{{\cal J} \cup \crl{i}}(\xi) \mu_i(d\xi),
$$
$$
k(x,z) = \left\{
\begin{array}{ll}
y_q & \mbox{ if } V_{q-1} \le z < V_q\\
y_{iq} & \mbox{ if } V_M + \sum_{j=1}^{i-1} x_j V_{jM_j} + x_i V_{i(q-1)}
\le z < V_M + \sum_{j=1}^{i-1} x_j V_{j M_j} + x_i V_{iq}\\
0 & \mbox{ if } z \ge V_M + \sum_{j=1}^{m} x_j V_{jM_j}
\end{array} \right.,
$$
where $V_q = \sum_{p=1}^q v_p$ and $V_{iq} = \sum_{p=1}^q v_{ip}$.
Theorem \ref{thmSDEjumps} is an extension of Theorem 8.1 in
Filipovic and Mayerhofer \cite{FilipovicMayerhofer}. Its proof is given in the appendix.

\begin{theorem} \label{thmSDEjumps}
If $\nu$ and $\mu_i$, $i \in {\cal I}$, are of the form \eqref{nuform}--\eqref{muform},
then there exists a measurable function $\sigma : D \to \mathbb{R}^{n \times n}$
satisfying \eqref{sigma} such that the SDE \eqref{SDEjumps} has for all
initial conditions $x \in D$ a unique strong solution. It is Feller process
satisfying {\rm (H1)--(H2)} corresponding to the
parameters $(a,\alpha,b,\beta,m,\mu)$.
\end{theorem}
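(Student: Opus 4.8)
The plan is to realize $X$ pathwise as an interlacing of an affine diffusion with the finitely many jumps prescribed by \eqref{nuform}--\eqref{muform}, and then to identify the law of this solution with that of the affine process by checking that its infinitesimal generator agrees with $\mathcal{A}$ on $C^2_c(D)$ and invoking well-posedness of the martingale problem for $\mathcal{A}$, which is guaranteed by Theorem 2.7 of \cite{DFS}. This is exactly why the statement asks only for \emph{some} measurable $\sigma$ with $\sigma\sigma^T(x)=a+\sum_{i\in\cI}x_i\alpha_i$: different measurable square roots give the same law. First I would construct such a $\sigma$. The matrix $\rho(x):=a+\sum_{i\in\cI}x_i\alpha_i$ is symmetric and, for $x\in D$, positive semi-definite. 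Using $a_{\cI\cI}=0$ and the fact that $\alpha_{i,\cI\cI}$ has only the $(i,i)$-entry nonzero, one builds $\sigma$ block by block exactly as in the proof of Theorem 8.1 of \cite{FilipovicMayerhofer}: the $\cI$-rows of $\sigma$ form the diagonal matrix with entries $\sqrt{\alpha_{i,ii}\,x_i}$, and the remaining rows are a measurable (Cholesky-type) square root of the relevant Schur complement. This yields $\sigma\sigma^T=\rho$ on $D$, the linear-growth bound $\norm{\sigma(x)}^2=\operatorname{tr}\rho(x)\le C(1+\norm{x})$, and---crucially for uniqueness---Lipschitz dependence of $\sigma(x)$ on $x_\cJ$ for fixed $x_\cI$.

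Next I would treat the jump part. For fixed $x\in D$ the Lebesgue measure of $\crl{z\ge 0:k(x,z)\neq 0}$ equals $\Lambda(x):=V_M+\sum_{j=1}^m x_j V_{jM_j}$, which is finite and locally bounded, while $k(x,z)$ takes its nonzero values in the finite set $\crl{y_q}\cup\crl{y_{iq}}\subset D\setminus\crl{0}$ and $x+k(x,z)\in D$ for all $x\in D$ (the $\mu_i$-type jumps carry zero rate on $\crl{x_i=0}$). Hence on every bounded time interval only finitely many jumps occur, and I can solve \eqref{SDEjumps} by interlacing: between consecutive jump times $X$ solves the pure-diffusion SDE $dX_t=\tilde b(X_t)\,dt+\sigma(X_t)\,dW_t$, and at each jump time a jump of the prescribed location is superimposed. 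Existence, pathwise uniqueness and invariance of $D$ for this diffusion SDE are precisely the content of Theorem 8.1 of \cite{FilipovicMayerhofer}; the compensator corrections $-\int\chi_\cJ\,d\nu$ and $-x_i\int\chi_{\cJ\cup\crl{i}}\,d\mu_i$ only modify the already-Lipschitz drift and vanish on $\crl{x_i=0}$, so they affect neither boundary non-attainment nor the uniqueness proof, and non-explosion follows from the at-most-linear growth of $\tilde b$ and $\sigma$ together with the boundedness of the jump sizes. Patching the diffusion pieces across the a.s.\ finitely many jump times produces a unique strong $D$-valued solution $X$ of \eqref{SDEjumps}, which is strong Markov.

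Finally I would identify the law. Applying \Ito's formula to $f(X_t)$ for $f\in C^2_c(D)$ and taking $\E_x$: the quadratic-variation term integrates against $\sigma\sigma^T=\rho$ to produce the second-order term in \eqref{infgen}, the drift $\tilde b$ together with the compensator of $N$ reproduces $\ang{b+\beta x,\nabla f}$, and the compensated jump measure produces exactly the two jump integrals in \eqref{infgen} --- this last step is where the explicit stacking of $k$ by the thresholds $V_q$, $V_{iq}$ is used. Thus $X$ solves the martingale problem for $(\mathcal{A},C^2_c(D))$; since by Theorem 2.7 of \cite{DFS} this martingale problem is well posed, the law of $X$ under each $\p_x$ coincides with that of the affine process with parameters $(a,\alpha,b,\beta,\nu,\mu)$, and in particular $X$ is Feller and satisfies (H1)--(H2). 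The main obstacle is the diffusion step borrowed from \cite{FilipovicMayerhofer}: $\sigma$ is only H\"older-$\tfrac12$ in the coordinates $x_i$, $i\in\cI$, and the solution must remain in $D$, which is why an arbitrary measurable square root will not do and one needs the block construction above, combining a Yamada--Watanabe comparison argument for the CIR-type $\cI$-components with the Lipschitz dependence of the remaining coefficients on the $\cJ$-components. The remaining ingredients --- interlacing the finitely many jumps and the \Ito\ computation --- are routine, so the appendix proof is essentially a reduction to Theorem 8.1 of \cite{FilipovicMayerhofer}.
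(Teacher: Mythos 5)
Your architecture --- build $\sigma$ via Theorem 8.1 of Filipovic--Mayerhofer, interlace the finitely many jumps between diffusion epochs, identify the law by a martingale-problem argument --- is the same as the paper's, and the construction of $\sigma$ and the interlacing step are essentially identical. But two places in your argument have real soft spots that the paper resolves with a specific technical device you omit.

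First, ``this martingale problem is well posed by Theorem 2.7 of \cite{DFS}'' overstates that theorem: it gives that the regular affine process is Feller with $C^2_c(D)$ as a core of its generator, but the passage from ``core'' to ``well-posed martingale problem'' requires Theorem 4.4.1 of Ethier--Kurtz, which the paper invokes explicitly.

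Second, and more importantly, to apply \Ito's formula to $f(X_t)$ and speak of the martingale problem at all, you need to already know that the interlaced process is defined for all $t$, i.e.\ that $\tau_\infty=\infty$. You assert this from ``at-most-linear growth of $\tilde b$, $\sigma$ together with boundedness of jump sizes,'' but because the jump intensity $V_M+\sum_j x_j V_{jM_j}$ is \emph{state-dependent}, this is not a one-line consequence: one must rule out the feedback loop in which the process wanders to large states, the jump rate grows, and jump times accumulate. The paper sidesteps proving this from scratch by adjoining the jump counter $Z_t$ to the state, so that $\tau_q$ becomes an exit time of $(X^{(\infty)},Z)$ from $D\times[0,q-1/2)$. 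It then applies the \emph{stopped} martingale problem (Theorem 4.6.1 of Ethier--Kurtz) to identify the law of $(X^{(\infty)}_{\cdot\wedge\tau_q},Z_{\cdot\wedge\tau_q})$ with that of the regular affine process $\tilde X$ on $D\times\mathbb{R}_+$ stopped at its $q$-th jump, and finally reads off $\tau_\infty=\infty$ from the fact that $\tilde X$, being Feller with RCLL paths, almost surely has only finitely many jumps on compact time intervals. You should either supply a genuine non-accumulation argument for the interlaced SDE or adopt this extended-state/stopped-martingale-problem device from the paper.
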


In general, one needs $L+1 := n+M+\sum_{i=1}^m M_i+2$ instruments to hedge all sources of risk.
We assume the first one to be a money market account yielding an instantaneous return of $r_t$.
In addition, one needs one instrument for each of the $n$ components of the
Brownian motion $W$, one for each of the possible jumps of $X$ and
one for the jump to default (without default, $L$ instruments are generally enough).
Of course, if the hedging instruments are redundant, not all
European contingent claims can be replicated; precise conditions
for a given option to be replicable are given in \eqref{eq_comp} below.
The hedging instruments should also be liquidly traded. For instance,
in addition to the money market account,
one could use a mix of instruments of the following types:
\begin{itemize}
\item Stock shares
\item Government bonds
\item Corporate bonds
\item Vanilla options.
\end{itemize}
All of these can be viewed as European options with different payoff functions.
Let us denote the set of hedging instruments different from the money market account by
$$
\Phi = \{(t_1,\varphi_1),\dots,(t_L,\varphi_L)\},$$
where $t_1,\dots,t_L$ are the maturities and $\varphi_l \in L_{t_l,x}$ the payoff functions.

Now consider a European option with maturity $t \le \min \crl{t_1, \dots, t_L}$ and
payoff function $\varphi\in L_{t,x}$. At time $0$, its price is
$$
\pi(t,x) := \E_x \edg{\exp(-R_t)\varphi(S_t)},
$$
and for $u \in [0,t]$,
$$
C_u = \left\{
\begin{array}{cl}
\pi(t-u,X_u) & \mbox{ if } u < \tau\\
\E_x \edg{e^{-(R_t-R_u)} \mid X_u} \varphi(0) & \mbox{ if } u \ge \tau
\end{array} \right. .
$$
After the default time $\tau$, $C_u$ behaves like a government bond and
can be hedged accordingly. To design the hedge before time $\tau$,
we introduce the following sensitivity parameters:
\begin{itemize}
\item Classical Greeks: for $q=1,\dots,n$:
$$H^q_{t,x} =\frac{\partial}{\partial x_q} \pi(t,x)$$
\item Sensitivities to the jumps corresponding to $\nu$: for all $q = 1, \dots, M$:
$$J^q_{t,x} = \pi(t,x+y_q) - \pi(t,x)$$
\item Sensitivities to the jumps corresponding to $\mu$:
for all $i \in\mathcal{I}$ and $q = 1, \dots,M_i$:
$$J^{iq}_{t,x} =\pi(t,x+y_{iq}) - \pi(t,x)$$
\item Sensitivity to default:
$$D_{t,x} =\Ex{\exp(-R_t)\varphi(0)}-\Ex{\exp(-R_t)\varphi(S_t)}.$$
\end{itemize}

\begin{Example}
\label{H1}
Consider a power payoff $\varphi(s)=s^p$ for some $p > 0$ such that $s^p \in L_{t,x}$.
Then the sensitivity parameters are given by
\beas
H^q_{t,x} &=& \frac{\partial}{\partial x_q}h_{t,x}(p) = B_q(t,p\varepsilon,p-1,p)h_{t,x}(p)\\
J^{q}_{t,x} &=& h_{t,x+y_q}(p)-h_{t,x}(p)\\
J^{iq}_{t,x} &=& h_{t,x+y_{iq}}(p)-h_{t,x}(p)\\
D_{t,x} &=& -h_{t,x}(p).
\eeas
\end{Example}

\begin{Example}
\label{H2}
For a European call option with log strike $k$ and maturity $t$, the classical Greeks are
\begin{align*}
H^q_{t,x} = \frac{\partial}{\partial x_q} c_{t,x}(k) = \frac{e^{-p k}}{\pi}
\int_0^{\infty} \Re \brak{e^{-iyk} \partial_{x_q} g_c(y)} dy,
\end{align*}
where
\beas
\partial_{x_q} g_c(y) &=& \partial_{x_q} \frac{h_{t,x}(p+1+iy)}{p^2 + p - y^2 + iy(2p+1)}\\
&=& \frac{B_q(t,(p+1+iy) \varepsilon,p+iy,p+1+iy)h_{t,x}(p+1+iy)}{p^2 + p - y^2 + iy(2p+1)}
\eeas
for some $p > 0$ such that $1+p \in U_{t,x}$.
The other sensitivities are
\begin{eqnarray*}
J^{q}_{t,x} &=& c_{t,x+y_q}(k)-c_{t,x}(k)\\
J^{iq}_{t,x} &=& c_{t,x+y_{iq}}(k)-c_{t,x}(k)\\
D_{t,x} &=& -c_{t,x}(k).
\end{eqnarray*}
\end{Example}

As shown in Examples \ref{H1} and \ref{H2}, the sensitivity parameters of power payoffs
and vanilla options can be given in closed or almost closed form.
The payoff $\varphi \in L_{t,x}$ can be approximated with
a linear combination of government bonds, power payoffs
and European calls of maturity $t$ as in Subsection \ref{sub:approx}.
The sensitivities of $\varphi$ are then approximated by the
the same linear combination of the sensitivities of the government bond, power payoffs and
European calls. If $X$ is a solution of the SDE \eqref{SDEjumps} and $\pi$ is a
$C^{1,2}$-function on $(0,t] \times D$, one obtains from \Ito's formula that for
$u \le t \wedge \tau$,
\beas
dC_u&=&\textrm{``drift"}\;du+\sum_{q,q'=1}^n H^q_{t-u,X_{u-}} \sigma_{qq'}(X_{u-})dW^{q'}_u
+ D_{t-u,X_{u-}} d1_{\crl{\tau \le u}}\\
&& + \int_{\mathbb{R}_+} (\pi(t-u,X_{u-} + k(X_{u-},z)) - \pi(t-u,X_{u-})) dN(du, dz),
\eeas
which can be written as
\beas
dC_u &=& \textrm{``drift"} \; du+\sum_{q,q'=1}^n H^q_{t-u,X_{u-}} \sigma_{qq'}(X_{u-})dW^{q'}_u
+ D_{t-u,X_{u-}} d1_{\crl{\tau \le u}}\\
&& + \sum_{q=1}^M J^q_{t-u,X_{u-}} dN^q_u
+ \sum_{i=1}^m \sum_{q=1}^{M_i} J^{iq}_{t-u,X_{u-}} dN^{iq}_u,
\eeas
where $N^q$ and $N^{iq}$ are Poisson processes with stochastic intensity
depending on $X$ but independent of the payoff function $\varphi$. For all $l=1, \dots, L$, define
$\pi^l$ and $C^l$ analogously to $\pi$ and $C$, and assume the functions $\pi^l$ are all
$C^{1,2}$ on $(0,t] \times D$. To hedge before default, one has to invest in
the hedging instruments such that the
resulting portfolio has the same sensitivities. That is, one tries to find
$\vartheta(u,x) \in \mathbb{R}^L$ such that for all $0 \le u < t \wedge \tau$,
\begin{align}
\label{eq_comp}
\begin{cases}
H^1_{t-u,X_u} = \sum_{l=1}^L \vartheta^l(u,X_u) H^{l,1}_{t_l-u,X_u}\\
\vdots\\
H^n_{t-u,X_u} = \sum_{l=1}^L \vartheta^l(u,X_u) H^{l,n}_{t_l-u,X_u}\\ \\
J^1_{t-u,X_u}=\sum_{l=1}^L \vartheta^l(u,X_u) J^{l,1}_{t_l-u,X_u}\\
\vdots\\
J^M_{t-u,X_u}=\sum_{l=1}^L \vartheta^l(u,X_u) J^{l,M}_{t_l-u,X_u}\\ \\
J^{11}_{t-u,X_u}=\sum_{l=1}^L \vartheta^l(u,X_u) J^{l,11}_{t_l-u,X_u}\\
\vdots\\
J^{m M_m}_{t-u,x}=\sum_{l=1}^L \vartheta^l(u,X_u) J^{l,mM_m}_{t_l-u,X_u}\\ \\
D_{t-u,X_u} =\sum_{l=1}^L \vartheta^l(u,X_u) D^l_{t_l-u,X_u},
\end{cases}
\end{align}
where on the left side are the sensitivities of $\varphi(S_t)$ and on the right, indexed by $l$,
the sensitivities of the hedging instruments $\varphi^l(S_{t_l})$.
$\vartheta^l(u,X_u)$ is the number of hedging instrument $l$ in the hedging portfolio before
default while the amount $C_u - \sum_{l=1}^L \vartheta^l(u,X_u) C^l_u$
is held in the money market account. Since $e^{-R_u} C_u$ and
$e^{-R_u} C^l_u$, $l =1, \dots, L$ are all martingales under $\p_x$,
this is a self-financing strategy replicating $C$ until time
$t \wedge \tau$. If default happens before time $t$, one simply holds
$\varphi(0)$ zero coupon government bonds from then until time $t$.

\eqref{eq_comp} is a system of $L$ linear equations with $L$ unknowns.
It may not have a solution if it is degenerate. But if the family $\Phi$
of hedging instruments is such that \eqref{eq_comp} has full rank for all
$0 \le u \le t \wedge \tau$, then any
European contingent claim can be replicated by dynamic trading.

\setcounter{equation}{0}
\section{Heston model with stochastic interest rates and possibility of default}
\label{sec:app}

As an example we discuss a Heston-type stochastic volatility model with stochastic interest rates and
possibility of default. It extends the model of Carr and Schoutens \cite{CarrSchoutens} and can
easily be extended further to include more risk factors.
Let $(X_t)_{t\ge0}$ be a process with values in $D = \mathbb{R}_+^2 \times \mathbb{R}$
moving according to
\bea 
\label{Heston1}
dX^1_t &=& \kappa_1\left(\theta_1-X^1_t\right)dt+ \eta_1 \sqrt{X^1_t}dW^1_t\\
\label{Heston2}
dX^2_t &=& \kappa_2\left(\theta_2-X^2_t\right)dt+ \eta_2 \sqrt{X^2_t}dW^2_t\\
\label{Heston3}
dX^3_t &=& - \frac{1}{2} X^1_t dt + \sqrt{X^1_t}dW^3_t
\eea
for a 3-dimensional Brownian motion $W$ with correlation matrix
$$\left(\begin{array}{ccc}
1 & 0 & \rho\\
0 & 1 & 0\\
\rho & 0 & 1
\end{array}\right)
$$
and non-negative constants $\kappa_1$, $\kappa_2$, $\theta_1$, $\theta_2$, $\eta_1$, $\eta_2$.
Since $X^1$ and $X^2$ are autonomous square-root processes, the system 
\eqref{Heston1}--\eqref{Heston3} has for all initial conditions $x \in \mathbb{R}^2_+ \times \mathbb{R}$
a unique strong solution, and it follows as in
the proof of Theorem \ref{thmSDEjumps} that it is a Feller process satisfying (H1)--(H2)
with parameters
\begin{itemize}
\item $a=0$, \; $\alpha^1= \frac{1}{2} \left(
\begin{array}{ccc}
\eta_1^2 & 0 & \rho \eta_1\\
0 & 0 & 0\\
\rho \eta_1 & 0 & 1
\end{array}\right)$, \; $\alpha^2= \frac{1}{2} \left(
\begin{array}{ccc}
0 & 0 & 0\\
0 & \eta_2^2 & 0\\
0 & 0 & 0
\end{array}\right)$\\
\item $b=(\kappa_1\theta_1,\kappa_2\theta_2,0)$, \; $\beta=\left(
\begin{array}{ccc}
-\kappa_1 & 0 & 0\\
0 & -\kappa_2 & 0\\
- 1/2 & 0 & 0
\end{array}\right)$.
\end{itemize}
Note that one cannot have correlation between $W^1$ and $W^2$ or $W^2$ and $W^3$ 
without destroying the affine structure of $X$. So we introduce 
dependence between $s$, the volatility process $X^1$, $r$ and $\lambda$ by setting
\beas
s_t &=& X^3_t, \quad r_t = d +\delta_1 X^1_t+\delta_2X^2_t,
\quad \lambda_t = c + \gamma_1 X^1_t + \gamma_2 X^2_t,\\
R_t &=& \int_0^t r_u du, \quad \Lambda_t = \int_0^t \lambda_u du
\eeas
for non-negative constants $c, \gamma_1, \gamma_2, d, \delta_1, \delta_2$. Then
$$
S_t = \exp \brak{s_t + R_t + \Lambda_t} 1_{\crl{t < \tau}}
$$
satisfies the SDE
$$
dS_t= S_{t-} \brak{\edg{c+d+(\gamma_1+\delta_1) X^1_t + \left(\gamma_2+\delta_2\right) X^2_t}
dt + \sqrt{X^1_t} dW^3_t- d1_{\crl{\tau \le t}}}.$$

\subsection{Pricing equations and moment explosions}

It follows from Corollary \ref{mp} that the discounted price
$\exp \brak{s_t + \Lambda_t} 1_{\crl{t < \tau}}$ is a martingale.
The discounted moments function is of the form
\beas
h_{t,x}(z)&=&\exp(A(t,(0,0,z),z-1,z)+\left<B(t,(0,0,z),z-1,z),x\right>)\\
&=&\exp\left(\tilde{A}(t,z)+\tilde{B}_1(t,z)x_1+\tilde{B}_2(t,z)x_2+ z x_3 \right),
\eeas
where
\begin{align}
\label{RicModel}
\begin{cases}
\partial_t \tilde{A}(t,z)=\kappa_1\theta_1\tilde{B}_1(t,z)+\kappa_2 \theta_2\tilde{B}_2(t,z)+(c+d)(z-1)\\
\partial_t \tilde{B}_1(t,z)=\frac{1}{2}\eta_1^2\tilde{B}_1^2(t,z)+(\rho\eta_1z-\kappa_1)\tilde{B}_1(t,z)
+ (\frac{1}{2}z + \gamma_1 + \delta_1) (z-1)\\
\partial_t \tilde{B}_2(t,z)=\frac{1}{2}\eta_2^2\tilde{B}_2^2(t,z)-\kappa_2\tilde{B}_2(t,z)+(\gamma_2+\delta_2)(z-1)\\[2mm]
\tilde{A}(0,x)=\tilde{B}_1(0,z)=\tilde{B}_2(0,z)=0.
\end{cases}
\end{align}
In this special case, $\tilde{B}_1$ and $\tilde{B}_2$ are both solutions of scalar Riccati ODEs that can be obtained
explicitly. The explosion times of the discounted moments
$$
t^*(p):=\sup\left\{t\ge0:\E[\exp(-R_t)S^p_t 1_{\crl{t < \tau}}]<\infty\right\}
$$
can also be determined in closed form:
\begin{align}
\label{tstar}
t^*(p)=
\begin{cases}
\frac{2}{\sqrt{-u}}\left(\arctan \frac{-u}{v}+\pi 1_{\{v<0\}}\right)&\textrm{if } u<0\\
\frac{1}{\sqrt{u}}\log\frac{v+\sqrt{u}}{v-\sqrt{u}}&\textrm{if } u \ge0,\;v >0\\
\infty & \textrm{otherwise}
\end{cases},
\end{align}
where
\beas
u &=& \rho \eta_1 p -\kappa_1,\\
v &=& \left(\frac{1}{2}p + \gamma_1 + \delta_1 \right)(p-1).
\eeas
The derivation of \eqref{tstar} is analogous to the derivation of the same formula
for the Heston model which can be found, for example, in Andersen and Piterbarg \cite{AndersenPiterbarg}.
Figure \ref{fig:explosion} shows the decay
of $t^*(p)$ for $p\rightarrow \pm \infty$.
\begin{figure}[htp]
\centering
\includegraphics[scale=.9]{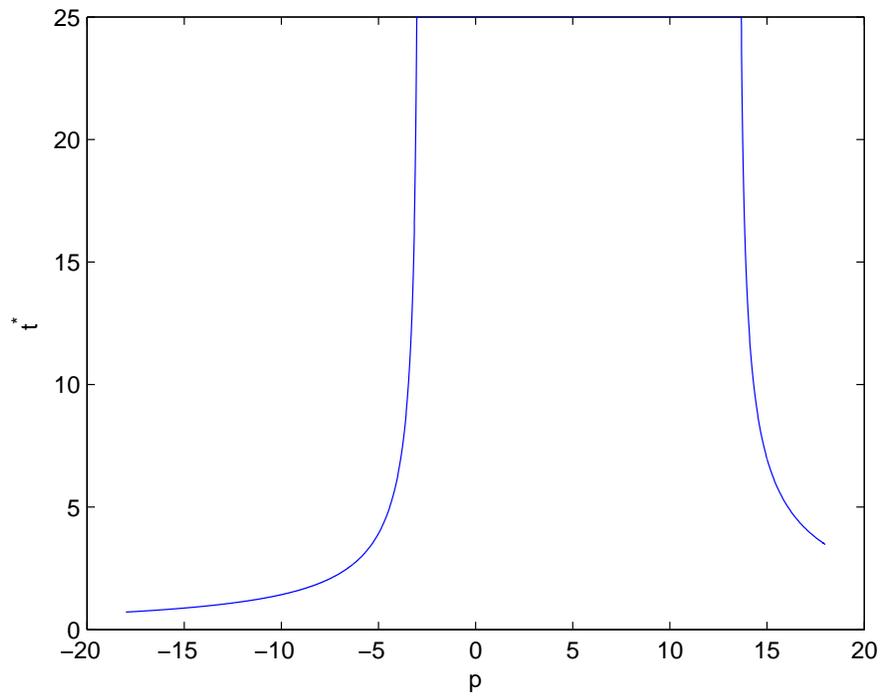}
\caption{Moment explosion times ($\kappa_1=0.06$, $\kappa_2=0.04$, $\eta_1=0.2$, $\eta_2=0.1$,
$\theta_1=1$, $\theta_2=0.3$, $\rho=-0.6$, $\gamma_0+\delta_0=0.02$, $\gamma_1+\delta_1=0.2$, $\gamma_2+\delta_2=0.2$).
}\label{fig:explosion}
\end{figure}
A sample plot of the implied volatility surfaces for the two cases of positive
probability of default and no default is given in Figure \ref{fig:impvol1}.
For the calculation of implied volatilities we used the yield on a government bond
as the interest rate in the Black--Scholes formula.
As one would expect, credit risk contributes towards a higher implied volatility,
especially at longer maturities or at extreme strikes. This effect can help explain why
implied volatilities usually exceed realized volatilities.
\begin{figure}
\centering
\includegraphics[scale=.8]{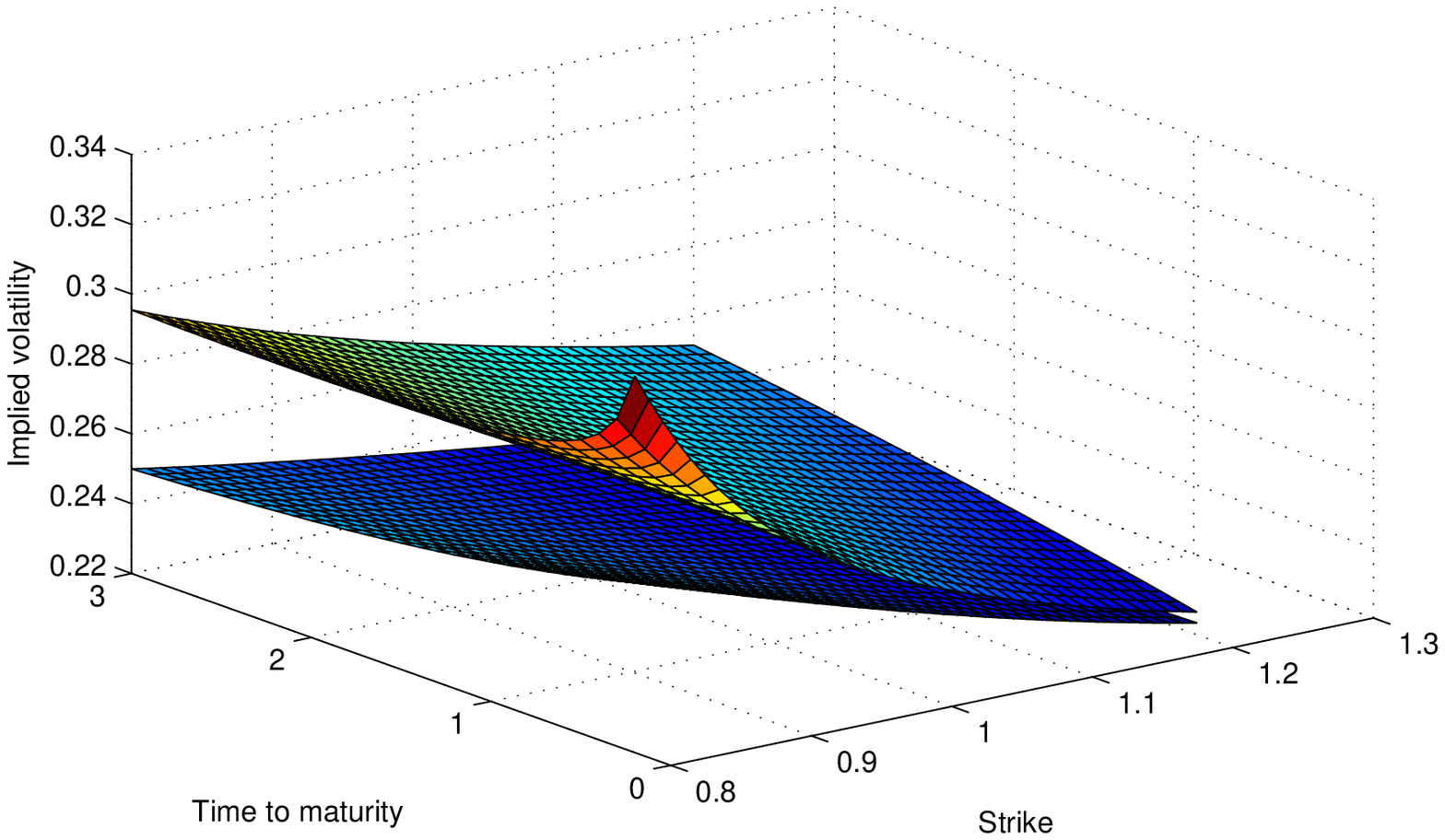}
\caption{Implied volatility surfaces with positive probability of default (upper
surface) vs. no default (lower surface). Parameters: $\kappa_1=0.06$,
$\kappa_2=0.04$, $\eta_1=0.2$,
$\eta_2=0.1$, $\theta_1=1$, $\theta_2=0.3$, $\rho=-0.6$, $c=0.02$ (no
default: $c=0$),
$\gamma_1=0.01$ (no default: $\gamma_1=0$), $\gamma_2=0.01$ (no
default: $\gamma_2=0$), $d=0.01$, $\delta_1=0.1$,
$\delta_2=0.1$, $X^1_0=0.05$, $X^2_0=0.03$, $X^3_0 =0$.}
\label{fig:impvol1}
\end{figure}

\subsection{Hedging}
\label{appl_hedging}
Denote by $\tilde{A}^0(t,z)$, $\tilde{B}^0_1(t,z)$ and $\tilde{B}^0_2(t,z)$
the solutions of the Riccati equations
\eqref{RicModel} for $c = \gamma_1 = \gamma_2=0$ (no default). Then
the price of a zero coupon government bond with maturity $t$ is given by
$$P^0_{t,x}=\exp(\tilde{A}^0(t,0)+\tilde{B}^0_1(t,0)x_1+\tilde{B}^0_2(t,0)x_2).$$
Assume there exists a $t > 0$ such that
\bea
\label{techCond}
&& \partial_{x_1} c_{u,x}(k)\tilde{B}^0_2(u,0)-\partial_{x_2}c_{u,x}(k)\tilde{B}^0_1(u,0) \neq\\
\notag
&& [\tilde{B}^0_1(u,0)\tilde{B}_2(u,0)-\tilde{B}^0_2(u,0)\tilde{B}_1(u,0)][\partial_{x_3}
c_{u,x}(k)-c_{u,x}(k)]
\eea
for all $0 \le u \le t$ and $x \in D$. Then every European contingent claim can be
hedged if the following four instruments can liquidly be traded: the stock, a zero
coupon government bond, a zero coupon corporate bond and a call option with log
strike $k$, the latter three all with maturity $t$.

Indeed, by Remarks \ref{H1} and \ref{H2}, the hedging parameters are as follows:
\begin{itemize}
\item Stock: $H^{S,1}_{u,x}= H^{S,2}_{u,x}=0$, $H^{S,3}_{u,x}=e^{x_3}$, $D^S_{u,x}=-e^{x_3}$.
\item Government bond: $H^{G,1}_{u,x}= \tilde{B}_1^0(u,0)P^0_{u,x}$,
$H^{G,2}_{u,x}= \tilde{B}_2^0(u,0) P^0_{u,x}$, $H^{G,3}_{u,x}=0$, $D^G_{u,x}=0$.
\item Corporate bond: $H^{B,1}_{u,x}= \tilde{B}_1(u,0)P_{u,x}$,
$H^{B,2}_{u,x}= \tilde{B}_2(u,0)P_{u,x}$,
$H^{B,3}_{u,x}=0$, $D^B_{u,x}= - P_{u,x}$.
\item Call option: $H^{C,q}_{u,x} = \partial_{x_q} c_{u,x}(k)$, $q = 1,2,3$, $D^{C}_{u,x}=-c_{u,x}(k)$,
\end{itemize}
and \eqref{techCond} is equivalent to
\beas
&& e^{x_3}P^0_{u,x}P_{u,x} \{-\partial_{x_1}c_{u,x}(k)
\tilde{B}^0_2(u,0)+\partial_{x_2}c_{u,x}(k)\tilde{B}^0_1(u,0)\\
&&+[\tilde{B}^0_1(u,0)\tilde{B}_2(u,0)-\tilde{B}^0_2(u,0)\tilde{B}_1(u,0)]
[\partial_{x_3}c_{u,x}(k)-c_{u,x}(k)]\} \neq 0.
\eeas
So the matrix
$$
\left(
\begin{array}{cccc}
0 & \tilde{B}^0_1(u,0)P^0_{u,x} & \tilde{B}_1(u,0)P_{u,x} & \partial_{x_1} c_{u,x}(k)\\
0 & \tilde{B}^{0}_2(u,0)P^0_{u,x} & \tilde{B}_2(u,0)P_{u,x} & \partial_{x_2} c_{u,x}(k)\\
e^{x_3} & 0 & 0 & \partial_{x_3} c_{u,x}(k)\\
-e^{x_3} & 0 & -P_{u,x} & -c_{u,x}(k)
\end{array}
\right).
$$
has full rank, and the system \eqref{eq_comp} always has a solution.

\begin{appendix}

\setcounter{equation}{0}
\section{Proof of Theorem \ref{thmSDEjumps}}

It is shown in Theorem 8.1 of
Filipovic and Mayerhofer \cite{FilipovicMayerhofer} that there exists a measurable function
$\sigma : D \to \mathbb{R}^{n \times n}$ satisfying \eqref{sigma} such that the SDE
\be \label{contSDE}
dX_t = \tilde{b}(X_t) dt + \sigma(X_t) dW_t
\ee
has for each initial condition $x \in D$ a unique strong solution $X^{(0)}$. Now set
$\tau_0 := 0$ and define iteratively
\bea
\label{tauq}
\tau_q &:=& \inf \crl{t > \tau_{q-1} : \int_0^t\int_{\R_+} k(X^{(q-1)}_{u-}, z) N(du,dz) \neq 0}\\
\notag
X^{(q)}_t &:=& 1_{\crl{0 \le t < \tau_q}} X^{(q-1)}_t
+ 1_{\crl{\tau_q \le t < \infty}} Y^{(q)}_t,
\eea
where $Y^{(q)}$ is the solution of the SDE \eqref{contSDE} on $[\tau_q, \infty)$ with
initial condition
$$
Y^{(q)}_{\tau_q} = X^{(q-1)}_{\tau_q} + k(X^{(q-1)}_{\tau_q-}, z) N(\tau_q,dz).
$$
Since $X^{(q-1)}$ is RCLL and the intensity measure of $N$ is Lebesgue measure,
it can be seen from \eqref{tauq} that $\tau_q > \tau_{q-1}$ a.s. So the process
$$
X^{(\infty)}_t := \sum_{q \ge 1} 1_{\crl{\tau_{q-1} \le t < \tau_q}} X^{(q-1)}_t
$$
is the unique strong solution of \eqref{SDEjumps} on $[0,\tau_{\infty})$, where
$\tau_{\infty} := \lim_{q \to \infty} \tau_q$.
It remains to show that $\tau_{\infty} = \infty$ and $X^{(\infty)}$ is a Feller process satisfying
(H1)--(H2) with parameters $(a,\alpha, b, \beta,m,\mu)$.
To do that we introduce the counting process
$$
Z_t := \sum_{q \ge 1} 1_{\crl{t \ge \tau_q}}.
$$
By \Ito's formula,
$$
f (X^{(\infty)}_{t \wedge \tau_q}, Z_{t \wedge \tau_q})
- \int_0^{t \wedge \tau_q}
\tilde{\mathcal{A}} f (X^{(\infty)}_u ,Z_u)du
$$
is for all $f \in C^2_c(D \times \mathbb{R}_+)$ and $q \ge 1$, a martingale, where
\beas
\tilde{\mathcal{A}}f(x,z) &=& \sum_{k,l=1}^n \left(a_{kl}+\left<\alpha_{\mathcal{I},kl},
x_\mathcal{I}\right>\right)\frac{\partial^2 f(x,z)}{\partial x_k\partial x_l}
+ \sum_{k=1}^n \brak{b_k + \sum_{l=1}^n \beta_{kl} x_l} \frac{\partial f(x,z)}{\partial x_k}\\
&& +\int_{D \backslash\{0\}} \left(f(x+\xi,z+1)-f(x,z)-\left<\nabla_{\mathcal{J}} f(x,z),
\chi_{\mathcal J}(\xi)\right>\right) \nu(d\xi)\\
&& +\sum_{i=1}^m \int_{D\backslash\{0\}}
\left(f(x+\xi,z+1)-f(x,z)-\left<\nabla_{\mathcal{J}\cup\{i\}}f(x,z),\chi_{\mathcal{J}
\cup\{i\}}(\xi)\right>\right)x_i\mu_i(d\xi).
\eeas
But $\tilde{\cal A}$ is the infinitesimal generator of a regular affine process $\tilde{X}$ 
with values in $D \times \mathbb{R}_+$.
So by Theorem 2.7 of Duffie et al. \cite{DFS},
$C^2_c(D \times \mathbb{R}_+)$ is a core of $\tilde{\cal A}$, and it follows from
Theorem 4.4.1 of Ethier and Kurtz \cite{EthierKurtz} that the martingale problem for
$\tilde{\mathcal{A}} \mid_{C^2_c(D \times \mathbb{R}_+)}$ is well-posed.
Moreover, the stopping times $\tau_q$ are exit times:
$$
\tau_q = \inf \crl{t\ge 0 : Z_t \notin [0, q - 1/2)}.
$$
Therefore, one obtains from Theorem 4.6.1 of Ethier and Kurtz \cite{EthierKurtz} that
the stopped martingale problem corresponding to $\tilde{\cal A} \mid_{C^2_c(D \times \mathbb{R}_+)}$
and $D \times [0, q - 1/2)$ is well-posed for all $q$. Hence,
$(X^{(\infty)}_{t \wedge \tau_q},Z_{t \wedge \tau_q})_{t \ge 0}$ has the same distribution as
$(\tilde{X}_{t \wedge \tilde{\tau}_q})_{t \ge 0}$, where $\tilde{\tau}_q$ is the $q$-th jump time of $\tilde{X}$.
Since $\tilde{X}$ is RCLL and $V_M + \sum_{i=1}^m V_{iM_i} X_{i,t}^{(\infty)}$ is the jump intensity of
$Z$, we conclude that almost surely, the process $X^{(\infty)}_{t \wedge \tau_{\infty}}$
jumps at most finitely many times on compact time intervals.
In particular, $\tau_{\infty} = \infty$, and Theorem \ref{thmSDEjumps} follows from 
Theorem 2.7 of Duffie et al. \cite{DFS} since the first $n$ components of $\tilde{X}$ form a 
regular affine processes with infinitesimal generator ${\cal A}$ acting on $f \in C^2_c(D)$ like
\beas
\mathcal{A}f(x) &=& \sum_{k,l=1}^n \left(a_{kl}+\left<\alpha_{\mathcal{I},kl},
x_\mathcal{I}\right>\right)\frac{\partial^2 f(x)}{\partial x_k\partial x_l}
+ \left<b+\beta x, \nabla f(x)\right>\\
&& +\int_{D \backslash\{0\}} \left(f(x+\xi)-f(x)-\left<\nabla_{\mathcal{J}} f(x),
\chi_{\mathcal J}(\xi)\right>\right) \nu(d\xi)\\
&& +\sum_{i=1}^m \int_{D\backslash\{0\}}
\left(f(x+\xi)-f(x)-\left<\nabla_{\mathcal{J}\cup\{i\}}f(x),\chi_{\mathcal{J}
\cup\{i\}}(\xi)\right>\right)x_i\mu_i(d\xi).
\eeas

\end{appendix}

\end{document}